\providecommand{\tabularnewline}{\\}
\theoremstyle{plain}
      \newtheorem{lem}{\protect\lemmaname}
      \newtheorem{lem}{\protect\lemmaname}[chapter]
\theoremstyle{plain}
      \newtheorem{prop}{\protect\propositionname}
      \newtheorem{prop}{\protect\propositionname}[chapter]
\definecolor{burntorange}{rgb}{0.8, 0.33, 0.0}
\definecolor{charcoal}{rgb}{0.21, 0.27, 0.31}
\definecolor{coolblack}{rgb}{0.0, 0.28, 0.49}
\definecolor{burntgreen}{rgb}{0.05, 0.45, 0.27}
\definecolor{burntblue}{rgb}{0.05, 0.27, 0.8}
\DeclareMathOperator*{\argmax}{argmax}
\newtheorem*{ub}{Upper Bound}
\newtheorem*{lb}{Lower Bound}
\providecommand{\lemmaname}{Lemma}
\providecommand{\propositionname}{Proposition}
\begin{document}
\title{Clustering of solutions in the symmetric binary perceptron}
\author{Carlo Baldassi}
\affiliation{Artificial Intelligence Lab, Institute for Data Science and Analytics,
Bocconi University, Milano, Italy}
\affiliation{Istituto Nazionale di Fisica Nucleare, Sezione di Torino, Italy}
\author{Riccardo Della Vecchia}
\affiliation{Artificial Intelligence Lab, Institute for Data Science and Analytics,
Bocconi University, Milano, Italy}
\author{Carlo Lucibello}
\affiliation{Artificial Intelligence Lab, Institute for Data Science and Analytics,
Bocconi University, Milano, Italy}
\author{Riccardo Zecchina}
\affiliation{Artificial Intelligence Lab, Institute for Data Science and Analytics,
Bocconi University, Milano, Italy}
\affiliation{International Centre for Theoretical Physics, Trieste, Italy}
\begin{abstract}
The geometrical features of the (non-convex) loss landscape of neural
network models are crucial in ensuring successful optimization and,
most importantly, the capability to generalize well. While minimizers'
flatness consistently correlates with good generalization, there has
been little rigorous work in exploring the condition of existence
of such minimizers, even in toy models. Here we consider a simple
neural network model, the symmetric perceptron, with binary weights.
Phrasing the learning problem as a constraint satisfaction problem,
the analogous of a flat minimizer becomes a large and dense cluster
of solutions, while the narrowest minimizers are isolated solutions.
We perform the first steps toward the rigorous proof of the existence
of a dense cluster in certain regimes of the parameters, by computing
the first and second moment upper bounds for the existence of pairs
of arbitrarily close solutions. Moreover, we present a non rigorous
derivation of the same bounds for sets of $y$ solutions at fixed
pairwise distances.
\end{abstract}
\maketitle

\section{Introduction}

The problem of learning to classify a set random patterns with a \emph{binary
perceptron} has been a recurrent topic since the very beginning of
the statistical physics studies of neural networks models \citep{gardner1988optimal}.
The learning problem consists in finding the optimal binary assignments
of the connection weights which minimize the number of misclassifications
of the patterns. We shall refer to such set of optimal assignments
as the space of solutions of the perceptron. In spite of the extremely
simple architecture of the model, the learning task is highly non
convex and its geometrical features are believed to play a role also
in more complex neural architectures \citep{watkin1993statistical,seung1992statistical,engel2001statistical}.

For the case of random i.i.d. patterns, the space of solutions of
the binary perceptron is known to be dominated by an exponential number
of isolated solutions \citep{krauth1989storage} which lie at a large
mutual Hamming distances \citep{huang2013entropy,huang2014origin}
(golf course landscape). An even larger number of local minima have
been shown to exist \citep{horner1992dynamics}.

The study of how the number of these isolated solutions decreases
as more patterns are learned provides the correct prediction for the
so-called capacity of the binary perceptron, i.e. the maximum number
of random patterns that can be correctly classified. However, the
same analysis does not provide the insight necessary for understanding
the behavior of learning algorithms: one would expect that finding
solutions in a golf course landscape should be difficult for search
algorithms, and indeed Monte Carlo based algorithms satisfying detailed
balance get stuck in local minima; yet, empirical results have shown
that many learning algorithms, even simple ones, are able to find
solutions efficiently \citep{braunstein2006learning,baldassi2007efficient,baldassi2009generalization,baldassi2015max}.

These empirical results suggested that the solutions which were not
the dominant ones in the Gibbs measure, and were as such neglected
in the analysis of the capacity, could in fact play an important algorithmic
role. As discussed in refs.~\citep{baldassi2015subdominant,baldassi2016local}
this turned out to be the actual case: the study of the dominant solutions
in the Gibbs measure theory does not take into account the existence
of rare (sub-dominant) regions in the solution space which are those
found by algorithms. Revealing those rare, accessible regions required
a large deviation analysis based on the notion of\emph{ local entropy},
which is a measure of the density of solutions in an extensive region
of the configuration space (see the precise definition in the next
section). The regions of maximal local entropy are extremely dense
in solutions, such that (for finite $N$) nearly every configuration
in the region is a solution. More recently, the existence of high
local entropy / flat regions has been found also in multi-layer networks
with continuous weights, and their role has been connected to the
structural characteristics of deep neural networks \citep{baldassi2019properties,baldassi2020shaping}.

All the above results rely on methods of statistical mechanics of
disordered systems which are extremely powerful and yet not fully
rigorous. It is therefore important to corroborate them with rigorous
bounds \citep{ding2019capacity}. In a recent paper \citep{aubin2019storage},
Aubin et al. have studied a simple variant of the binary perceptron
model for which the rigorous bounds provided by first and second moment
methods can be shown to be tight. The authors have been able to confirm
the predictions of the statistical physics methods concerning the
capacity of the model, and the golf course nature of the space of
solutions. The model that the authors have studied has a modified
activation criterion compared to the traditional perceptron, replacing
the Heaviside step function by a function with an even symmetry.

The goal of the present paper is to study the existence of dense regions
in the the symmetrized binary perceptron model. In sec.~\ref{sec:large-dev}
we define the model and, as a preliminary step, we present the results
of the replica-method large deviation analysis, which predicts that
the phenomenology for the symmetrized model is the same as for the
traditional one, and thus that high local entropy regions exist. If
these predictions are correct, then it should be possible, at least
for some range of the parameters, to choose any integer number $y\ge2$
and find a threshold $x_{c}\left(y\right)$ such that for any $x<x_{c}\left(y\right)$
there is an exponential number of groups of $y$ solutions all at
mutual Hamming distance $\left\lfloor Nx\right\rfloor $. In the remainder
of the paper we try to verify this statement, by employing the first
and second moment methods where possible. In sec.~\ref{sec:y=00003D2}
we address the $y=2$ case: we extend the analysis of ref.~\citep{aubin2019storage}
and show rigorously (except for a numerical optimization step) that,
for small enough constraint density $\alpha,$ there exist an exponential
number of pairs of solutions at arbitrary $O\left(N\right)$ Hamming
distance. In sec.~\ref{sec:Caseygen} we study the general $y$ case.
For $y=3$ or $4$, we can derive a rigorous upper bound that coincides
with the non-rigorous results for general $y$. As for the lower bound,
only the $y=2$ case can be derived rigorously (and again it coincides
with the non-rigorous results that we also derive). All the results
are thus consistent with the existence of high local entropy regions,
as predicted by the large deviation study.

\section{Dense clusters in the symmetric binary perceptron\label{sec:large-dev}}

\subsection{Model definition}

We investigate the rectangular-binary-perceptron (RBP) problem introduced
in ref. ~\citep{aubin2019storage}. The RBP has the key property
of having a symmetric activation function, characterized by a parameter
$K>0$. Given a vector of binary weights $\mathbf{w}\in\left\{ \pm1\right\} ^{N}$
and an input $\boldsymbol{\boldsymbol{\xi}}\in\mathbb{R}^{N}$ (an
example), we say that $\mathbf{w}$ satisfies the example if $\left|\boldsymbol{\xi}\cdot\mathbf{w}\right|<K$.\footnote{This setting corresponds to a binary classification problem with training
examples from a single class. This simplifies the analysis.} This symmetry simplifies the theoretical analysis and allows to obtain
tighter bounds for the storage capacity through the first and second
moment methods. 

For a given set of inputs $\boldsymbol{\boldsymbol{\xi}}^{\mu}\in\mathbb{R}^{N}$,
with $\mu=1,\dots,M$, the RBP problem can be expressed as a constraint
satisfaction problem (CSP) over the binary weights. Throughout the
paper we will assume the entries $\xi_{i}^{\mu}$ to be \emph{i.i.d.}
Gaussian variables with zero mean and variance $1/N$. A binary vector
$\mathbf{w}\in\left\{ \pm1\right\} ^{N}$ is called a \emph{solution
}of the problem if it satisfies

\begin{equation}
\sum_{i=1}^{N}w_{i}\xi_{i}^{\mu}\in I_{K}\qquad\forall\mu\in\left[M\right],\label{eq:recteq}
\end{equation}
where $I_{K}=\left[-K,K\right]$. 
\global\long\def\ind{\mathbbm{1}}%
 Equivalently, a vector $\mathbf{w}$ is a solution of the RBP problem
iff the function $\mathbb{X}_{\boldsymbol{\xi},K}:\left\{ -1,1\right\} ^{N}\to\left\{ 0,1\right\} $,
defined as

\begin{equation}
\mathbb{X}_{\boldsymbol{\xi},K}\left(\mathbf{w}\right)=\prod_{\mu=1}^{M}\ind\left(\,\sum_{i=1}^{N}w_{i}\xi_{i}^{\mu}\in I_{K}\,\right),\label{eq:CSP}
\end{equation}
is equal to one, where we have denoted with $\ind\left(p\right)$
an indicator function that is $1$ if the statement $p$ is true and
$0$ otherwise.

The \textit{storage capacity} is then defined similarly to the satisfiability
threshold in random constraint satisfaction problems: we denote the
constraint density as $\alpha\equiv M/N$ and define the storage capacity
$\alpha_{c}\left(K\right)$, also known as SAT-UNSAT transition point,
as the infimum of densities $\alpha$ such that, in the limit $N\to\infty$,
with high probability (over the choice of the matrix $\xi_{i}^{\mu}$)
there are no solutions. It is natural to conjecture that the converse
also holds, i.e. that the storage capacity $\alpha_{c}\left(K\right)$
equals the supremum of $\alpha$ such that in the limit $N\to\infty$
solutions exist with high probability. In this case we would say the
storage capacity is a \textit{sharp threshold}.

\subsection{Replicated Systems and Dense Clusters}

In order to obtain a geometric characterization of the solution space,
we consider the Hamming distance of any two configurations $\boldsymbol{w}^{1}$
and $\boldsymbol{w}^{2}$, defined by

\[
d_{\mathrm{H}}\left(\mathbf{w}^{1},\mathbf{w}^{2}\right)\equiv\sum_{i=1}^{N}\left(1-w_{i}^{1}w_{i}^{2}\right)/2.
\]

Even if an exponential number of solutions exist for $\alpha<\alpha_{c}\left(K\right)$,
the overwhelming majority are \emph{isolated}: for each such solution,
there exists a radius $r_{\min}$ such that the number of other solutions
within a distance $\left\lfloor Nr_{\min}\right\rfloor $ is sub-exponential.
We are interested instead in the presence of \emph{dense regions},
which are characterized by the fact that there is a configuration
around which the number of solutions within a given radius $\left\lfloor Nr\right\rfloor $
is exponential for all $r$ in some neighborhood of $0$. We speak
of \emph{ultra-dense} regions when the logarithm of the density of
solutions tends exponentially fast to $0$ as $r\to0$.

Suppose now that a dense region around some reference configuration
exists, choose a sufficiently small value $r>0$, and call $x$ the
typical distance between any two solutions at distance $r$ from the
reference. In general, $0<x\le2r$, and for an ultra-dense region
$x=2r\left(1-r\right)$ in the limit of large $N$. Therefore for
any $x$ below some threshold there should exist an exponential number
of solutions at mutual normalized distance $x$.

We thus investigate the problem of finding a set of $y$ solutions
of the RBP problem, where $y$ is an arbitrary natural number, with
all pairwise distances constrained to some value $\left\lfloor Nx\right\rfloor $.
The existence (for some range of $\alpha$) of such set of solutions,
w.h.p. in the large $N$ limit, for arbitrarily large values of $y$
and all $x$ in some neighborhood of $0$, is a necessary condition
for the presence of dense regions. These sets of $y$ solutions would
coexist with an exponentially larger number of isolated solutions,
and therefore the usual tools of statistical physics are not sufficient
to reveal their presence, and a large deviation analysis is necessary
\citep{baldassi2015subdominant}.

As a starting point for the analysis we introduce the partition function
of the model with $y$ real replicas, $\mathcal{Z}_{y}$, accounting
for the number of such sets (up to a $y!$ symmetry factor). For any
fixed (normalized) distance $x\in\left[0,1\right]$, this is given
by 

\begin{align}
\mathcal{Z}_{y}\left(x,K,\boldsymbol{\xi}\right) & \equiv\sum_{\left\{ \mathbf{w}^{a}\right\} _{a=1}^{y}}\prod_{a=1}^{y}\mathbb{X}_{\boldsymbol{\xi},K}\left(\mathbf{w}^{a}\right)\ \prod_{a<b}^{y}\ind\left(d_{H}\left(\mathbf{w}^{a},\mathbf{w}^{b}\right)=\left\lfloor Nx\right\rfloor \right).\label{eq:Zy}
\end{align}

The summation here is over the $2^{yN}$ spin configurations. We denote
with $\alpha_{c}^{y}(x,K)$ the SAT/UNSAT threshold (if it exists)
in the $N\uparrow\infty$ limit and under the probability distribution
for $\boldsymbol{\xi}$ described in the previous Section. The asymptotic
behavior is captured by the (normalized) local entropy $\phi_{y}$
defined by\footnote{We use a simpler definition compared to ref.~\citep{baldassi2015subdominant}
here, avoiding the explicit use of a reference configuration. The
technical justification for this can be found in ref.~\citep{baldassi2020shaping};
intuitively, the reference is defined implicitly as the barycenter,
and the results are basically equivalent for large $y$.}

\begin{equation}
\phi_{y}\left(x,K,\alpha\right)=\lim_{N\to\infty}\frac{1}{yN}\mathbb{E}_{\boldsymbol{\xi}}\ln\mathcal{Z}_{y}\left(x,K,\boldsymbol{\xi}\right).
\end{equation}

The interpretation of this quantity is as follows. If $\phi_{y}$
is positive, the number of groups of $y$ solutions is exponential.
For any group of $y$ solutions that contributes to the sum in $\mathcal{Z}_{y}$
we can use their barycenter (which will be at distance $r=\frac{1-\sqrt{1-2x}}{2}$
from each of them) as a reference configuration, and in the limit
of large $y$ the sum is dominated by the regions with the highest
density of solutions at distance $r$ from their center, provided
they are evenly distributed. Also in this limit the logarithm of the
density of solutions is computed as $\phi_{y}\left(x,K,\alpha\right)-\phi_{y}\left(x,K,0\right)=\phi_{y}\left(x,K,\alpha\right)-H_{2}\left(\frac{1-\sqrt{1-2x}}{2}\right)$
where $H_{2}\left(r\right)=-r\ln r-\left(1-r\right)\ln\left(1-r\right)$
is the two-state entropy function. If a dense region exists around
a configuration, we should observe a positive $\phi_{y}$ for all
$y$ and for all $x$ in some neighborhood of $0$, and for ultra-dense
regions we should have $\lim_{y\to\infty}\phi_{y}\left(x,K,\alpha\right)=H_{2}\left(\frac{1-\sqrt{1-2x}}{2}\right)-O\left(e^{-\frac{1}{x}}\right)$
for sufficiently small $x$.\footnote{Although these are in principle necessary conditions, and not sufficient,
the latter scenario of a log-density going to $0$ in particular seems
very unlikely in the absence of ultra-dense regions, and indeed when
the matter was investigated numerically for the standard perceptron
model these rare regions were found and their properties were in good
agreement with the theory in a wide range of parameters \citep{baldassi2015subdominant,baldassi2016local}.}

The computation of $\phi_{y}$ can be approached by rigorous techniques
only for small $y$, as discussed in the next sections. In the general
case, for any finite $y$ and in the $y\to\infty$ limit, it can be
carried out at present only using the non-rigorous replica method
of statistical physics of disordered systems. The computations for
this model follow entirely those of ref.~\citep{baldassi2015subdominant}
and are reported in Appendix~\ref{subsec:Limit}.

\begin{figure}
\begin{centering}
\includegraphics[width=0.9\columnwidth]{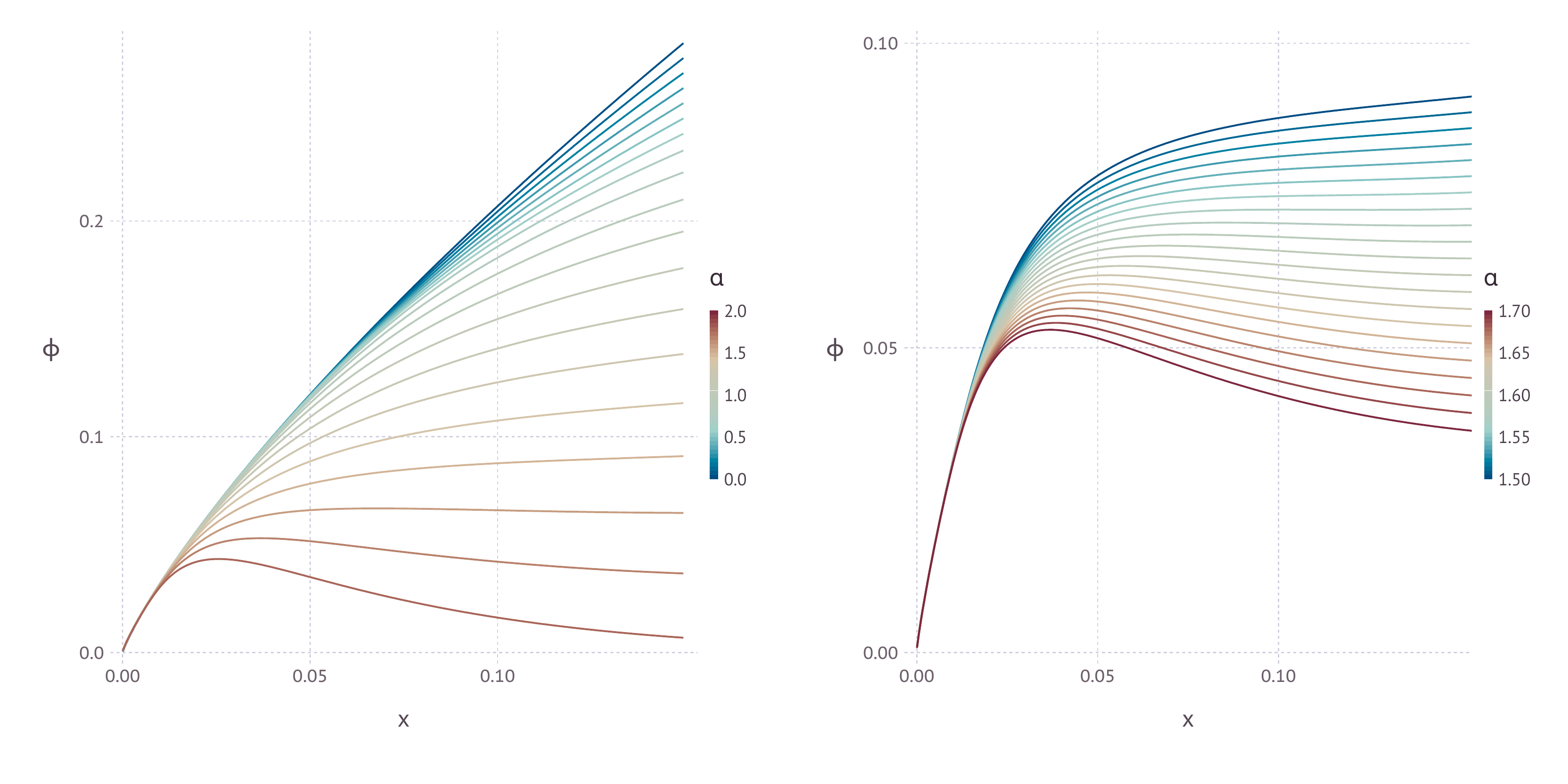}
\par\end{centering}
\caption{\label{fig:clusterone}Plot of free local entropy of eq.~(\ref{eq:Zy})
as a function of the normalized Hamming distance between solutions
$x$, obtained with the replica method using the replica-symmetric
ansatz (see Appendix~\ref{subsec:Limit} for the details). In both
figures the value of the half-width of the channel is $K=1$. (Left)
Curves for $\alpha=0$ up to $\alpha=1.8$ in steps of $0.1$. When
the distance $x$ approaches zero we see that all curves tend to coincide
with the curve for $\alpha=0$, meaning that there exist regions of
solutions that are maximally dense (nearly all configurations are
solutions) in their immediate surroundings. (Right) Zoom on the interval
of values of $\alpha$ where there is a change in monotonicity, which
we interpret as signaling a fragmentation of the dense clusters into
separate pieces. We refine the step of $\alpha$ to $0.01$, and we
find that the change happens for $\alpha_{U}\simeq1.58$.}
\end{figure}

The replica analysis in the $y\to\infty$ limit strongly suggests
the existence of ultra-dense regions of solutions: as shown in fig.~\ref{fig:clusterone},
for $K=1$ and for sufficiently small $x$ the curves for $\alpha$
below the SAT-UNSAT transition, i.e. $\alpha<\alpha_{c}\simeq1.815\dotsc$,
tend to collapse onto the curve for $\alpha=0$, implying that these
regions are maximally dense in their immediate surroundings (nearly
all configurations are solutions in an extensive region centered around
their barycenter). Furthermore, there is a transition at around $\alpha_{U}\simeq1.58$
after which the curves are no longer monotonic. Overall, this is the
same phenomenology that was observed (and confirmed by numerical simulations)
for the standard binary perceptron model in ref.~\citep{baldassi2015subdominant},
and we interpret it in the same way, i.e.~we speculate that ultra-dense
sub-dominant regions of solutions exist, and that the break of monotonicity
at $\alpha_{U}\simeq1.58$ signals a transition\footnote{In ref.~\citep{baldassi2015subdominant} it was shown that some geometric
constraints are violated in a region of $x$ for $\alpha\ge\alpha_{U}$
implying the onset of strong symmetry-breaking effects, with numerical
evidence supporting the switch to a different regime.} between two regimes: one for low $\alpha$ in which the ultra-dense
regions are immersed in a vast connected structure, and one at high
$\alpha$ in which the structure of the dense solutions fragments
into separate regions that are no longer easily accessible.\footnote{It should be noted that in the standard binary perceptron case (i.e.
with sign activation) there is empirical evidence only for the first
scenario of a vast connected structure with ultra-dense regions in
it, while the second scenario of fragmented regions has never been
directly observed at large $N$, arguably due to the intrinsic algorithmic
hardness of finding such regions.}

These results were obtained with the so-called replica-symmetric ansatz,
and they are certainly not exact. However, as in previous studies
\citep{baldassi2015subdominant}, the corrections (which would require
the use of a replica-symmetry-broken ansatz) only become numerically
relevant at relatively large $\alpha$ (e.g. we may expect small corrections
to the value of $\alpha_{U}$, and larger effects close to $\alpha_{c}$),
and they don't affect the qualitative picture, the emerging phenomenology
and its physical interpretation.

\section{Pairs of solutions ($y=2$): rigorous bounds\label{sec:y=00003D2}}

We are able to derive rigorous lower and upper bounds for the existence
of pairs of solutions, i.e. for the $y=2$ case, without resorting
to the replica method. 

The idea of the derivation follows very closely the strategy used
in refs.~\citep{mezard2005clustering,daude2008pairs} for the random
K-SAT problem.

We define a SAT-$x$-pair as a pair of binary weights $\mathbf{w}^{1},\mathbf{w}^{2}\in\left\{ -1,1\right\} ^{N}$,
which are both solutions of the CSP, and whose Hamming distance is
$d_{H}\left(\mathbf{w}^{1},\mathbf{w}^{2}\right)=\left\lfloor Nx\right\rfloor $.
The number of such pairs is $\mathcal{Z}_{y=2}\left(x,K,\boldsymbol{\xi}\right)$,
see eq.~(\ref{eq:Zy}).

\subsection{Upper bound: the first moment method\label{subsec:Upper-bound:-the}}

In this section we are interested in finding an upper-bound (which
depends on $x$) to the critical capacity of pairs of solutions. To
do that we use the upper bound $\mathbb{P}\left[X>0\right]\leq\mathbb{E}\left[X\right]$
that holds when the random variable $X$ is non-negative and integer-valued.
When we apply it to the random variable $\mathcal{Z}_{y=2}$ we get:

\begin{align}
\mathbb{P}\left[\mathcal{Z}_{y=2}\left(x,K,\boldsymbol{\xi}\right)>0\right] & \leq\mathbb{E}\left[\mathcal{Z}_{y=2}\left(x,K,\boldsymbol{\xi}\right)\right]=2^{N}\tbinom{N}{\left\lfloor Nx\right\rfloor }\mathbb{P}\left[v_{1}\in I_{K},v_{2}\in I_{K}\right]^{M}\label{eq:firm}
\end{align}
where we have introduced the two Gaussian random variables $v_{1}$
and $v_{2}$, with $\mathbb{E}\left[v_{1}\right]=\mathbb{E}\left[v_{2}\right]=0,$
$\mathbb{E}\left[v_{1}^{2}\right]=\mathbb{E}\left[v_{2}^{2}\right]=1,$
and covariance
\begin{equation}
\mathbb{E}\left[v_{1}v_{2}\right]=\frac{N-2\left\lfloor Nx\right\rfloor }{N}\underset{N\to+\infty}{\longrightarrow}1-2x.
\end{equation}

Let us consider the normalized logarithm of the first moment,

\begin{equation}
F\left(x,K,\alpha\right)=\lim_{N\to\infty}\frac{1}{N}\ln\mathbb{E}\left[\mathcal{Z}_{y=2}\left(x,K,\boldsymbol{\xi}\right)\right]=\ln2+H_{2}\left(x\right)+\alpha\ln f_{1}\left(x,K\right),
\end{equation}
where as before $H_{2}\left(x\right)=-x\ln x-\left(1-x\right)\ln\left(1-x\right)$
is the two-state entropy function while $f_{1}\left(x,K\right)$ is
defined as follows. Denote with $\Sigma_{2}$ the covariance matrix
of the Gaussian random vector $\vec{v}=\left(v_{1},v_{2}\right)$
whose components have covariance equal to $1-2x$ and variances equal
to one. We define $f_{1}\left(x,K\right)$ as the probability that
this random vector takes values in the box $\left[-K,K\right]^{2}$:

\begin{align}
f_{1}\left(x,K\right) & =\frac{1}{2\pi\left|\Sigma_{2}\right|^{1/2}}\int_{-K}^{K}\int_{-K}^{K}dv_{1}dv_{2}e^{-\vec{v}^{T}\Sigma_{2}^{-1}\vec{v}}\label{eq:f1}\\
 & =\int_{-K}^{K}du_{1}\frac{e^{-u_{1}^{2}/2}}{\sqrt{2\pi}}\int_{\frac{-K-\left(1-2x\right)u_{1}}{2\sqrt{x\left(1-x\right)}}}^{\frac{K-\left(1-2x\right)u_{1}}{2\sqrt{x\left(1-x\right)}}}du_{2}\frac{e^{-u_{2}^{2}/2}}{\sqrt{2\pi}}.\nonumber 
\end{align}
From the inequality (\ref{eq:firm}), $F\left(x,K,\alpha\right)<0$
implies that $\lim_{N\rightarrow\infty}\mathbb{P}\left[\mathcal{Z}_{y=2}\left(x,K,\xi\right)>0\right]=0$.
In turn this provides the upper bound we are seeking:

\begin{ub} For each $K$ and $0<x<1$, and for all $\alpha$ such
that
\begin{equation}
\alpha>\alpha_{UB}\left(x,K\right)\equiv-\frac{\ln2+H_{2}\left(x\right)}{\ln f_{1}\left(x,K\right)}\label{eq:aUB}
\end{equation}
there are no SAT-$x$-pairs w.h.p.\end{ub}. 

Notice that the first moment computation for $Z_{y=2}(x)$ is similar
to the second moment computation for $\mathcal{Z}_{y=1}$ in ref.
\citep{aubin2019storage}: in the former $x$ enters as an external
constraint, in the latter as an order parameter to be optimized.

The upper bound that we obtained for $K=1$ and as a function of $x$
is shown in fig.~\ref{fig:fig2UBLB}. For $x=0$ the upper bound
trivially reduces to the one for a single replica as found in ref.~\citep{aubin2019storage}.
The same happens also for $x=1/2$, as the two constrained replicas
behave as independent systems in the large $N$ limit. 

\subsection{Lower bound: the second moment method \label{subsec:Lower-bound:-the}}

\global\long\def\mb{\mathbf{w}}%
\global\long\def\mbt{\tilde{\mathbf{w}}}%

We compute the lower bound to the critical capacity using the second
moment method, which is a direct consequence of the Cauchy-Schwarz
inequality:
\begin{lem}
[Second moment method]If $X$ is a non-negative random variable,
then

\begin{equation}
\mathbb{P}\left[X>0\right]\geq\frac{\mathbb{E}\left[X\right]^{2}}{\mathbb{E}\left[X^{2}\right]}.\label{eq:secmom}
\end{equation}
\end{lem}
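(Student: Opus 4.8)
The plan is to derive the bound directly from the Cauchy–Schwarz inequality, exactly as the statement advertises. The single idea that makes everything work is to exploit the non-negativity of $X$ to rewrite it in a \emph{factored} form that isolates the event $\left\{ X>0\right\} $. Since $X\geq0$, the identity $X=X\,\ind\left(X>0\right)$ holds pointwise: when $X=0$ both sides vanish, and when $X>0$ the indicator equals one. This seemingly trivial rewriting is the crux, because it introduces the indicator of the very event whose probability we want to bound from below.

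With this in hand, I would apply the Cauchy–Schwarz inequality to the two factors $X$ and $\ind\left(X>0\right)$, viewed as elements of $L^{2}$ of the underlying probability space:
\begin{equation}
\mathbb{E}\left[X\right]^{2}=\mathbb{E}\left[X\,\ind\left(X>0\right)\right]^{2}\leq\mathbb{E}\left[X^{2}\right]\,\mathbb{E}\left[\ind\left(X>0\right)^{2}\right].
\end{equation}
Here the first equality is the factorization just established. I would then use the idempotence of indicators, $\ind\left(X>0\right)^{2}=\ind\left(X>0\right)$, together with $\mathbb{E}\left[\ind\left(X>0\right)\right]=\mathbb{P}\left[X>0\right]$, to turn the last factor into exactly $\mathbb{P}\left[X>0\right]$. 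Dividing through by $\mathbb{E}\left[X^{2}\right]$ then yields the claimed inequality.

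There is essentially no obstacle here; the only points requiring a moment's care are the degenerate cases. If $\mathbb{E}\left[X^{2}\right]=0$ then $X=0$ almost surely, so $\mathbb{P}\left[X>0\right]=0$ and there is nothing to prove; we may therefore assume $\mathbb{E}\left[X^{2}\right]>0$ so that the division is legitimate. If moreover $\mathbb{E}\left[X\right]=0$ the right-hand side is zero and the inequality holds trivially, since probabilities are non-negative. In the application below with $X=\mathcal{Z}_{y=2}$, the random variable is a sum of finitely many non-negative terms with strictly positive expectation, so both provisos are met. The lemma itself is just the clean entry point: the genuine work comes afterwards, in estimating $\mathbb{E}\left[X^{2}\right]$ and comparing it to $\mathbb{E}\left[X\right]^{2}$ for the perceptron partition function.
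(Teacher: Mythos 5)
Your proof is correct and is precisely the argument the paper has in mind: the paper states the lemma as ``a direct consequence of the Cauchy--Schwarz inequality,'' and your factorization $X = X\,\ind\left(X>0\right)$ followed by Cauchy--Schwarz applied to $X$ and $\ind\left(X>0\right)$ is exactly that standard derivation, with the degenerate case $\mathbb{E}\left[X^{2}\right]=0$ handled cleanly.
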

From the results of section \ref{subsec:Upper-bound:-the} we have 

\begin{equation}
\mathbb{E}\left[\mathcal{Z}_{y=2}\left(x,K,\boldsymbol{\xi}\right)\right]=2^{N}\binom{N}{\left\lfloor Nx\right\rfloor }f_{1}\left(\frac{\left\lfloor Nx\right\rfloor }{N},K\right)^{M},
\end{equation}
where $f_{1}\left(x,K\right)$ is defined like in eq.~(\ref{eq:f1}).
The second moment of the random variable $\mathcal{Z}_{y=2}$ follows
from simple combinatorics and reads

\begin{align}
\mathbb{E}\left[\mathcal{Z}_{y=2}^{2}\left(x,K,\boldsymbol{\xi}\right)\right] & =\sum_{\left\{ \mathbf{w}^{1}\right\} }\sum_{\left\{ \mathbf{w}^{2}\right\} }\sum_{\left\{ \mathbf{\tilde{w}}^{1}\right\} }\sum_{\left\{ \mathbf{\tilde{w}}^{2}\right\} }\ind\left(d_{H}\left(\mathbf{w}^{1},\mathbf{w}^{2}\right)=\left\lfloor Nx\right\rfloor \right)\ \ind\left(d_{H}\left(\mathbf{\tilde{w}}^{1},\mathbf{\tilde{w}}^{2}\right)=\left\lfloor Nx\right\rfloor \right)\times\nonumber \\
 & \quad\times\prod_{\mu=1}^{M}\mathbb{E}\left[\ind\left(w^{1}\cdot\xi^{\mu}\in I_{K}\right)\ \ind\left(w^{2}\cdot\xi^{\mu}\in I_{K}\right)\ \ind\left(\tilde{w}^{1}\cdot\xi^{\mu}\in I_{K}\right)\ \ind\left(\tilde{w}^{2}\cdot\xi^{\mu}\in I_{K}\right)\right]\label{eq:ExpZ2}\\
 & =2^{N}\sum_{\mathbf{a}\in V_{N,x}\cap\left\{ 0,1/N,2/N,\ldots,1\right\} ^{8}}\frac{N!}{\prod_{i=0}^{7}\left(Na_{i}\right)!}f_{2}\left(\mathbf{a},x,K\right)^{M},\nonumber 
\end{align}
where we have adopted the following conventions.
\begin{itemize}
\item $\mathbf{\mathbf{a}}$ is an $8$-component vector giving the proportion
of each type of quadruplets $\left(w_{i}^{1},w_{i}^{2},\tilde{w}_{i}^{1},\tilde{w}_{i}^{2}\right)$
as described in the table below, where we have arbitrarily (but without
loss of generality) fixed $\mathbf{w}^{1}$ to $\left(1,\ldots,1\right)$.
Fixing the vector $\mathbf{\mathbf{a}}$ entails fixing all the possible
overlaps between the vectors $w^{1},w^{2},\tilde{w}^{1}$ and $\tilde{w}^{2}$
and consequently the covariances of the random variables $z_{1}:=w^{1}\cdot\xi$,
$z_{2}:=w^{2}\cdot\xi$, $\tilde{z}_{1}:=\tilde{w}^{1}\cdot\xi$ and
$\tilde{z}_{2}:=\tilde{w}^{2}\cdot\xi$ with $\xi_{i}\sim\mathcal{N}\left(0,1/N\right)$
i.i.d. These covariances as functions of $\mathbf{a}$ are made explicit
in eq.~(\ref{eq:alloverlaps}).
\end{itemize}
\begin{center}
\begin{tabular}{ccccccccc}
 & $a_{0}$ & $a_{1}$ & $a_{2}$ & $a_{3}$ & $a_{4}$ & $a_{5}$ & $a_{6}$ & $a_{7}$\tabularnewline
\hline 
$w_{i}^{1}$  & + & + & + & + & + & + & + & +\tabularnewline
$w_{i}^{2}$ & + & + & + & + & $-$ & $-$ & $-$ & $-$\tabularnewline
$\tilde{w}_{i}^{1}$ & + & + & $-$ & $-$ & + & + & $-$ & $-$\tabularnewline
$\tilde{w}_{i}^{2}$ & + & $-$ & + & $-$ & + & $-$ & + & $-$ \tabularnewline
\end{tabular}
\par\end{center}
\begin{itemize}
\item $f_{2}\left(\mathbf{a},x,K\right)$ has the expression
\begin{align*}
f_{2}\left(\mathbf{\mathbf{a}},x,K\right) & =\mathbb{P}\left[z_{1}\in I_{K},z_{2}\in I_{K},\tilde{z}_{1}\in I_{K},\tilde{z}_{2}\in I_{K}\right].
\end{align*}
where $\mathbf{z}^{T}\coloneqq\left(z_{1},z_{2},\tilde{z}_{1},\tilde{z}_{2}\right)$
is a $4$-dimensional Gaussian vector, with the following set of covariances:
\\
\begin{equation}
\Sigma=\left(\begin{array}{cccc}
1 & q_{1} & q_{01} & q_{02}\\
q_{1} & 1 & q_{03} & q_{04}\\
q_{01} & q_{03} & 1 & q_{1}\\
q_{02} & q_{04} & q_{1} & 1
\end{array}\right)\quad\textrm{where}\quad\left\{ \begin{array}{l}
q_{1}=1-2\frac{\left\lfloor Nx\right\rfloor }{N}\\
q_{01}=1-2\left(a_{2}+a_{3}+a_{6}+a_{7}\right)\\
q_{02}=1-2\left(a_{1}+a_{3}+a_{5}+a_{7}\right)\\
q_{03}=1-2\left(a_{2}+a_{3}+a_{4}+a_{5}\right)\\
q_{04}=1-2\left(a_{1}+a_{3}+a_{4}+a_{6}\right)
\end{array}\right..\label{eq:alloverlaps}
\end{equation}
Therefore $f_{2}\left(\mathbf{\mathbf{a}},x,K\right)$ can be simply
written as the following Gaussian integral
\begin{equation}
f_{2}\left(\mathbf{a},x,K\right)=\int_{I_{K}^{4}}dz_{1}dz_{2}d\tilde{z}_{1}d\tilde{z}_{2}\frac{1}{\left(2\pi\right)^{2}\left|\Sigma\right|^{1/2}}e^{-\frac{1}{2}\mathbf{z}^{T}\Sigma^{-1}\text{\ensuremath{\mathbf{z}}}}.
\end{equation}
 
\item The set $V_{N,x}\subset\left[0,1\right]^{8}$ is a simplex specified
by: 
\begin{equation}
\left\{ \begin{array}{l}
\lfloor N\left(a_{4}+a_{5}+a_{6}+a_{7}\right)\rfloor=\lfloor Nx\rfloor\\
\lfloor N\left(a_{1}+a_{2}+a_{5}+a_{6}\right)\rfloor=\lfloor Nx\rfloor\\
\sum_{i=0}^{7}a_{i}=1
\end{array}\right..\label{eq:eqsympN}
\end{equation}
These three conditions correspond to the normalization of the proportions
and to the enforcement of the conditions $d_{\mb^{1}\mb^{2}}=\lfloor Nx\rfloor$,
$d_{\mbt^{1}\mbt^{2}}=\lfloor Nx\rfloor$. When $N\to\infty$, $V_{x}=\bigcap_{N\in\mathbb{N}}V_{N,x}$
defines a five-dimensional simplex described by the three hyperplanes:
\begin{equation}
\left\{ \begin{array}{l}
a_{4}+a_{5}+a_{6}+a_{7}=x\\
a_{1}+a_{2}+a_{5}+a_{6}=x\\
\sum_{i=0}^{7}a_{i}=1
\end{array}\right..\label{eq:const}
\end{equation}
\end{itemize}
In order to yield an asymptotic estimate of $\mathbb{E}\left[\mathcal{Z}_{y=2}^{2}\right]$
we first use the following known result, which comes from the approximation
of integrals by sums (proof in Appendix~\ref{subsec:A-Proof-of-Lemma-2}):
\begin{lem}
\label{Lemma02}Let $\psi\left(\mathbf{a}\right)$ be a real, positive,
continuous function of $\mathbf{a}$, and let $V_{N,x}$, $V_{x}$
be as defined above. Then for any given $x$ there exists a constant
$C_{0}$ such that for sufficiently large $N$:\footnote{Here and below this $8$-dimensional integration is to be understood
as being performed with a uniform measure in the $5$-dimensional
subspace $V_{x}$, i.e.~$\int_{V_{x}}d\mathbf{a}\equiv\int_{[0,1]^{8}}d\mathbf{a}\,\delta\left(a_{4}+a_{5}+a_{6}+a_{7}-x\right)\delta\left(a_{1}+a_{2}+a_{5}+a_{6}-x\right)\delta\left(\sum_{i=0}^{7}a_{i}-1\right)$,
where $\delta$ is a Dirac delta, cf.~eq.~(\ref{eq:const}).}
\begin{equation}
\sum_{\mathbf{a}\in V_{N,x}\cap\left\{ 0,1/N,2/N,\ldots,1\right\} ^{8}}\frac{N!}{\prod_{i=0}^{7}\left(Na_{i}\right)!}\psi\left(\mathbf{a}\right)^{N}\leq C_{0}N^{3/2}\int_{V_{x}}d\mathbf{a}\ e^{N\left[H_{8}\left(\mathbf{a}\right)+\ln\psi\left(\mathbf{a}\right)\right]},\label{eq:discont}
\end{equation}
where $H_{8}\left(\mathbf{a}\right)=-\sum_{i=0}^{7}a_{i}\ln a_{i}$.
\end{lem}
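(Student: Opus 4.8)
The plan is to bound the discrete sum by Stirling's approximation of the multinomial coefficient together with a Riemann-sum comparison against the Dirac-delta integral that defines $\int_{V_x}d\mathbf a$, checking that the two polynomial-in-$N$ factors that arise multiply to exactly $N^{3/2}$. The guiding arithmetic is $N^{3/2}=N^{5}\cdot N^{-7/2}$: the $N^{5}$ is the density of admissible lattice points in the five-dimensional simplex, and the $N^{-7/2}$ is the square-root prefactor produced by Stirling.

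First I would apply $n!=\sqrt{2\pi n}\,(n/e)^{n}(1+o(1))$ to every factorial in $\frac{N!}{\prod_{i=0}^{7}(Na_i)!}$. Writing $a_i=k_i/N$ and using $\sum_i a_i=1$, the pieces $(N/e)^{N}$ in the numerator and $\prod_i(Na_i/e)^{Na_i}=e^{-N}N^{N}\prod_i a_i^{Na_i}$ in the denominator cancel their common $e^{-N}N^{N}$, leaving $\prod_i a_i^{-Na_i}=e^{NH_8(\mathbf a)}$; the square-root factors combine to $(2\pi N)^{-7/2}\prod_i a_i^{-1/2}$. Thus, for lattice points with all coordinates bounded away from $0$, one gets the two-sided estimate $\frac{N!}{\prod_i(Na_i)!}=\Theta\big(N^{-7/2}\prod_i a_i^{-1/2}\big)e^{NH_8(\mathbf a)}$, and for the upper bound I only need the direction $\le C_1 N^{-7/2}\prod_i a_i^{-1/2}e^{NH_8(\mathbf a)}$, which follows from the explicit one-sided inequalities $n!\le e\sqrt n\,(n/e)^{n}$ (on $N!$) and $n!\ge\sqrt{2\pi n}\,(n/e)^{n}$ (on the $(Na_i)!$), valid for $n\ge1$, with $0!=1$ treated separately for vanishing coordinates.

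Next I would convert the constrained lattice sum into the integral. Discretizing the eightfold integral of the footnote on the grid $\{0,1/N,\dots,1\}^{8}$ replaces $\int_{[0,1]^{8}}d^{8}a$ by $N^{-8}\sum_{\mathbf k}$; since each Dirac delta has an argument living on the rescaled grid $\tfrac1N\mathbb Z$, each delta discretizes to $N\,\ind[\cdot]$, so the three constraints together contribute a factor $N^{3}$. Hence $\int_{V_x}g\,d\mathbf a\approx N^{-5}\sum_{\mathbf a}g(\mathbf a)$ over the admissible lattice, i.e. the point density per unit five-volume is $\Theta(N^{5})$. Combined with the Stirling prefactor $N^{-7/2}$ this gives the advertised $N^{3/2}$. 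The mismatch between the floored constraints of $V_{N,x}$ and the exact hyperplanes of $V_x$ is merely an $O(1/N)$ translation of the admissible affine subspace and is absorbed into $C_0$.

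The \emph{main obstacle} is the boundary of the simplex, where some $a_i\to0$: there the prefactor $\prod_i a_i^{-1/2}$ diverges and, worse, $\partial_{a_i}H_8=-\ln a_i-1$ blows up, so $e^{Nf}$ with $f=H_8+\ln\psi$ is not slowly varying over a cell of side $1/N$ and the naive ``value $\approx$ cell-integral'' comparison breaks. I would handle this by matching each admissible point $\mathbf a_0$ to a disjoint cell $R_{\mathbf a_0}\subset V_x$ of five-volume $\Theta(N^{-5})$ extended in the directions of increasing small coordinates: moving a distance $1/N$ off the boundary changes $NH_8$ by $\approx\ln(1/a_i)$, a growth factor $\gtrsim 1/a_i\ge a_i^{-1/2}$ that absorbs the divergent weight, yielding $N^{-5}\prod_i (a_0)_i^{-1/2}e^{Nf(\mathbf a_0)}\le C_2\int_{R_{\mathbf a_0}}e^{Nf}\,d\mathbf a$, and summing the cell bounds gives the claim. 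An equivalent and cleaner route is to split the sum into an interior part (all coordinates $\ge\varepsilon$), where $\prod_i a_i^{-1/2}\le\varepsilon^{-4}$ and $\nabla H_8$ are bounded so the standard Riemann comparison applies directly, and boundary strata where one or more coordinates are $O(1/N)$; on each such stratum the drop in effective lattice dimension together with the entropy suppression of configurations with a nearly-empty coordinate makes the contribution subdominant to the interior, so the uniform $C_0N^{3/2}$ bound survives.
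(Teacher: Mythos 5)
Your proposal follows essentially the same route as the paper's own proof: a Stirling bound on the multinomial coefficient (contributing $N^{-7/2}$), identification of the admissible lattice points with a $1/N$-spaced grid in the five-dimensional polytope (contributing the density $N^{5}$), and a Riemann-sum comparison giving the net $N^{3/2}$ --- the identical bookkeeping, with the paper organizing the reduction from eight to five variables via the observation that each five-dimensional lattice point $\vec{a}\in V_{x}^{\prime}$ has only $O(1)$ preimages in $V_{N,x}$. The only substantive difference is in your favor: you explicitly handle the simplex boundary, where $\prod_{i}a_{i}^{-1/2}$ and $\nabla H_{8}$ blow up, by matching lattice points to inward-shifted cells, whereas the paper applies the fixed-$\mathbf{a}$ Stirling asymptotic uniformly and simply invokes ``the limit of Riemann sums.''
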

The bound for the second moment then reads:

\begin{equation}
\mathbb{E}\left[\mathcal{Z}_{y=2}^{2}\left(x,K,\boldsymbol{\xi}\right)\right]\leq C_{0}N^{3/2}\int_{V_{x}}d\mathbf{a}\ e^{N\left[\ln2+H_{8}\left(\mathbf{a}\right)+\alpha\ln f_{2}\left(\mathbf{a},x,K\right)\right]},\label{eq:boundsm}
\end{equation}
which is obtained from substitution of eq.~(\ref{eq:ExpZ2}) into
Lemma \ref{Lemma02}. The number of components of the vector $\mathbf{a}$
is eight, but we can reduce their number to five with a change of
variables and rewrite the integral in a particularly simple form where
$f_{2}$ just depends on four of them. This is done in Appendix~\ref{subsec:Change-of-integration}.
Here we give just the final expression where the new integration variables
are $\eta$ (a scalar) and $\vec{q}_{0}=\left(q_{01},q_{02},q_{03},q_{04}\right)$.
The bound becomes

\begin{equation}
\mathbb{E}\left[\mathcal{Z}_{y=2}^{2}\left(x,K,\boldsymbol{\xi}\right)\right]\leq C_{0}N^{3/2}\int_{\tilde{V}_{x}}d\vec{q}_{0}\ d\eta\ e^{N\left[\ln2+H_{8}\left(\vec{q}_{0},\eta,x\right)+\alpha\ln f_{2}\left(\vec{q}_{0},x,K\right)\right]},\label{eq:boundsm3}
\end{equation}
where:
\begin{itemize}
\item $f_{2}\left(\vec{q}_{0},x,K\right)$ has the expression
\[
f_{2}\left(\vec{q}_{0},x,K\right)=\int_{I_{K}^{4}}dz_{1}dz_{2}d\tilde{z}_{1}d\tilde{z}_{2}\frac{1}{\left(2\pi\right)^{2}\left|\Sigma\right|^{1/2}}e^{-\frac{1}{2}\mathbf{z}^{T}\Sigma^{-1}\text{\ensuremath{\mathbf{z}}}},
\]
where $\Sigma$ is the covariance matrix of eq.~(\ref{eq:alloverlaps})
with $q_{1}=1-2x$ and where the components of $\vec{q}_{0}$ are
considered as independent variables.
\item $H_{8}\left(\vec{q}_{0},\eta,x\right)$ is defined as the Shannon
entropy of a probability mass function with masses corresponding to
the components of the following vector:
\begin{equation}
\left(\begin{array}{c}
\frac{1}{4}\left(q_{02}+q_{03}+2-4x\right)+\eta\\
\frac{1}{4}\left(q_{01}-q_{02}+2x\right)-\eta\\
\frac{1}{4}\left(-q_{03}+q_{04}+2x\right)-\eta\\
\frac{1}{4}\left(2-q_{01}-q_{04}-4x\right)+\eta\\
\frac{1}{4}\left(q_{01}-q_{03}+2x\right)-\eta\\
\eta\\
\frac{1}{4}\left(-q_{01}+q_{02}+q_{03}-q_{04}\right)+\eta\\
\frac{1}{4}\left(-q_{02}+q_{04}+2x\right)-\eta
\end{array}\right);\label{eq:vectorH8}
\end{equation}
\item $\tilde{V}_{x}$ is the new domain of integration specified by the
inequalities
\end{itemize}
\begin{equation}
\left\{ \begin{array}{l}
\frac{1}{4}\left(q_{01}-q_{02}+2x-4\right)\leq\eta\leq\frac{1}{4}\left(q_{01}-q_{02}+2x\right)\\
\frac{1}{4}\left(-q_{03}+q_{04}+2x-4\right)\leq\eta\leq\frac{1}{4}\left(-q_{03}+q_{04}+2x\right)\\
\frac{1}{4}\left(q_{01}+q_{04}+4x-2\right)\leq\eta\leq\frac{1}{4}\left(q_{01}+q_{04}+4x+2\right)\\
\frac{1}{4}\left(q_{01}-q_{03}+2x-4\right)\leq\eta\leq\frac{1}{4}\left(q_{01}-q_{03}+2x\right)\\
0\leq\eta\leq1\\
\frac{1}{4}\left(q_{01}-q_{02}-q_{03}+q_{04}\right)\leq\eta\\
\frac{1}{4}\left(-q_{02}+q_{04}+2x-4\right)\leq\eta\leq\frac{1}{4}\left(-q_{02}+q_{04}+2x\right)\\
\frac{1}{4}\left(-q_{02}-q_{03}+4x-2\right)\leq\eta
\end{array}\right.,\label{eq:syseqris2}
\end{equation}

some of which are already contained in eq.~(\ref{eq:vectorH8}). 
\begin{prop}
\label{phi-secmombound}For each $K,x$, define:
\begin{equation}
\Phi_{x,K,\alpha}\left(\vec{q}_{0},\eta\right)=H_{8}\left(\vec{q}_{0},\eta,x\right)-\ln2-2H_{2}\left(x\right)+\alpha\ln f_{2}\left(\vec{q}_{0},x,K\right)-2\alpha\ln f_{1}\left(x,K\right).\label{defphi}
\end{equation}
and let $\left(\vec{q}_{0}^{M},\eta^{M}\right)\in\tilde{V}_{x}$ be
the global maximum of $\Phi_{x,K,\alpha}$ restricted to $\tilde{V}_{x}$.
Then there exists a $x,K$-dependent constant $C>0$ such that, for
$N$ sufficiently large, 
\begin{equation}
\frac{\mathbb{E}\left[\mathcal{Z}_{y=2}\left(x,K,\boldsymbol{\xi}\right)\right]^{2}}{\mathbb{E}\left[\mathcal{Z}_{y=2}^{2}\left(x,K,\boldsymbol{\xi}\right)\right]}\geq C\exp\left(-N\Phi_{x,K,\alpha}\left(\vec{q}_{0}^{M},\eta^{M}\right)\right).\label{cond}
\end{equation}
\end{prop}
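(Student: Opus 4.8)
The plan is to estimate the numerator and the denominator of the ratio in (\ref{cond}) separately, each to exponential-times-polynomial precision, and then to observe that, by the very definition (\ref{defphi}) of $\Phi_{x,K,\alpha}$, all the exponential contributions recombine into $-N\Phi_{x,K,\alpha}(\vec q_0^M,\eta^M)$ while the polynomial prefactors cancel against one another.

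First I would pin down the asymptotics of the numerator. Since $\mathbb{E}[\mathcal{Z}_{y=2}]=2^N\binom{N}{\lfloor Nx\rfloor}f_1(\lfloor Nx\rfloor/N,K)^M$, Stirling's formula gives $\binom{N}{\lfloor Nx\rfloor}\ge c\,N^{-1/2}e^{N H_2(\lfloor Nx\rfloor/N)}$ for some $c>0$; because $0<x<1$ is fixed, $|\lfloor Nx\rfloor/N-x|\le 1/N$, and the local Lipschitz continuity of $H_2$ and of $\ln f_1(\cdot,K)$ near $x$ (with $f_1(x,K)>0$) turns the replacement of $\lfloor Nx\rfloor/N$ by $x$ into a bounded multiplicative error, noting that $M=\Theta(N)$ so that $M\cdot O(1/N)=O(1)$. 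Hence $\mathbb{E}[\mathcal{Z}_{y=2}]^2\ge c_1 N^{-1}\exp\!\big(2N[\ln2+H_2(x)+\alpha\ln f_1(x,K)]\big)$ for $N$ large.

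For the denominator I would start from the bound (\ref{eq:boundsm3}) produced by Lemma~\ref{Lemma02}, writing $g(\vec q_0,\eta)=\ln2+H_8(\vec q_0,\eta,x)+\alpha\ln f_2(\vec q_0,x,K)$, so that $\mathbb{E}[\mathcal{Z}_{y=2}^2]\le C_0 N^{3/2}\int_{\tilde V_x}e^{Ng}\,d\vec q_0\,d\eta$. The compactness of $\tilde V_x$, the continuity of $g$ (which rests on $f_2$ being strictly positive and smooth in $\vec q_0$ as long as $\Sigma$ stays positive definite), and the fact that $(\vec q_0^M,\eta^M)$ realizes the global maximum already yield the exponential rate $e^{N\max_{\tilde V_x}g}$. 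To obtain the $N$-independent constant claimed in (\ref{cond}), I would sharpen the crude volume bound into a Laplace upper bound: split $\tilde V_x$ into a small ball around the maximizer and its complement; on the complement $g\le \max g-\epsilon$, contributing an exponentially negligible term, while on the ball a second-order Taylor expansion of $g$ with negative-definite Hessian bounds the integrand by a Gaussian and produces a factor $\Theta(N^{-5/2})$ (the dimension of $\tilde V_x$ is five). This gives $\mathbb{E}[\mathcal{Z}_{y=2}^2]\le C_2 N^{3/2}N^{-5/2}e^{N\max g}=C_2 N^{-1}e^{N\max g}$.

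It then remains to assemble the pieces. Since the terms $-\ln2-2H_2(x)-2\alpha\ln f_1(x,K)$ in (\ref{defphi}) do not depend on $(\vec q_0,\eta)$, maximizing $g$ is the same as maximizing $\Phi_{x,K,\alpha}$, and $\max_{\tilde V_x}g=2\ln2+2H_2(x)+2\alpha\ln f_1(x,K)+\Phi_{x,K,\alpha}(\vec q_0^M,\eta^M)$. Dividing the numerator lower bound by the denominator upper bound, the factor $\exp\!\big(2N[\ln2+H_2+\alpha\ln f_1]\big)$ cancels against the corresponding part of $e^{N\max g}$, the $N^{-1}$ of the numerator cancels the $N^{-1}$ of the denominator, and what survives is exactly $(c_1/C_2)\,\exp\!\big(-N\Phi_{x,K,\alpha}(\vec q_0^M,\eta^M)\big)$, giving (\ref{cond}) with $C=c_1/C_2$. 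The delicate point — and the reason a purely symbolic proof is out of reach — is the Laplace step: it requires the global maximizer to lie in the interior of $\tilde V_x$ with a non-degenerate (negative-definite) Hessian, so that the polynomial orders match and no boundary contribution dominates. Verifying the location of $(\vec q_0^M,\eta^M)$ and the definiteness of the Hessian is precisely the numerical optimization step, and it is also where one checks whether $\Phi_{x,K,\alpha}(\vec q_0^M,\eta^M)=0$ (the factorized point being the global maximum), which is the regime in which (\ref{cond}) becomes a useful lower bound on $\mathbb{P}[\mathcal{Z}_{y=2}>0]$.
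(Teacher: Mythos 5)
Your proposal is correct and follows essentially the same route as the paper's proof: a Stirling lower bound on $\mathbb{E}\left[\mathcal{Z}_{y=2}\right]^{2}$ carrying an $N^{-1}$ prefactor, a Laplace upper bound on the integral from Lemma~\ref{Lemma02} producing the $N^{3/2}\cdot N^{-5/2}=N^{-1}$ prefactor for $\mathbb{E}\left[\mathcal{Z}_{y=2}^{2}\right]$, and a final cancellation in which only $\exp\left(-N\Phi_{x,K,\alpha}\left(\vec{q}_{0}^{M},\eta^{M}\right)\right)$ survives. Your explicit caveat that the $N^{-5/2}$ Laplace factor presupposes a maximizer with non-degenerate negative-definite Hessian (and no dominating boundary contribution) is a condition the paper's proof invokes tacitly when attributing $N^{-1}=N^{3/2-5/2}$ to ``Gaussian fluctuations around the $5$-dimensional saddle point,'' so on that step your write-up is, if anything, the more careful one.
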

\begin{proof}
Applying Laplace method to the integral in eq.~(\ref{eq:boundsm3}),
for some constant $C_{1}$ and for $N$ large enough we obtain 
\begin{equation}
\mathbb{E}\left[\mathcal{Z}_{y=2}^{2}\left(x,K,\boldsymbol{\xi}\right)\right]\leq C_{1}N^{-1}e^{N\left[\ln2+H_{8}\left(\vec{q}_{0}^{M},\eta,x\right)+\alpha\ln f_{2}\left(\vec{q}_{0}^{M},x,K\right)\right]},
\end{equation}

where the factor $N^{-1}=N^{\frac{3}{2}-\frac{5}{2}}$ stems from
the Gaussian fluctuations around the $5$-dimensional saddle point.
For the first moment instead, a simple application of Stirling formula
to eq.~(\ref{eq:firm}) leads, for some constant $c_{1}$ and $N$
large enough, to 

\begin{equation}
\mathbb{E}\left[\mathcal{Z}_{y=2}\left(x,K,\boldsymbol{\xi}\right)\right]^{2}\geq c_{1}N^{-1}e^{2N\left[\ln2+H_{2}\left(x\right)+\alpha\ln f_{1}\left(x,K\right)\right]}.
\end{equation}

Combining the two expressions, the proposition follows.
\end{proof}
Given that $\Phi_{x,K,\alpha}\left(\vec{q}_{0}^{M},\eta^{M}\right)\ge0$,
the second moment method gives a useful bound just when $\Phi_{x,K,\alpha}\left(\vec{q}_{0}^{M},\eta^{M}\right)=0$.
If instead $\Phi_{x,K,\alpha}\left(\vec{q}_{0}^{M},\eta^{M}\right)>0$,
the probability is bounded above zero (included) and the bound is
non-informative. 

For a particular point $\left(\vec{q}_{0}^{\star},\eta^{\star}\right)\in\tilde{V}_{x}$,
which can be interpreted intuitively as capturing the situation where
the two pairs of solutions are uncorrelated, we have that $\Phi_{x,K,\alpha}\left(\vec{q}_{0}^{\star},\eta^{\star}\right)=0$
for all values of $\alpha$. This point $\left(\vec{q}_{0}^{\star},\eta^{\star}\right)$
is specified by the following equations,
\begin{equation}
q_{01}^{\star}=0,\ q_{02}^{\star}=0,\ q_{03}^{\star}=0,\ q_{04}^{\star}=0,\ \eta^{\star}=\frac{x^{2}}{2}.
\end{equation}
In that case, we have the following properties:
\begin{itemize}
\item $H_{8}\left(\vec{q}_{0}^{\star},\eta^{\star},x\right)=\ln2+2H_{2}\left(x\right)$, 
\item $f_{2}\left(\vec{q}_{0}^{\star},x,K\right)=f_{1}\left(x,K\right)^{2}$.
\end{itemize}
Therefore, $\alpha_{LB}$ is the largest value of $\alpha$ such that
$\left(\vec{q}_{0}^{\star},\eta^{\star}\right)$ is a global maximum,
i.e.~such that there exists no $\left(\vec{q}_{0},\eta\right)\in\tilde{V}_{x}$
with $\Phi_{x,K,\alpha}\left(\vec{q}_{0},\eta\right)>0$. In particular,
for $\alpha=0$ the second moment bound holds (proof in Appendix~\ref{subsec:Proof-of-eq}):
\begin{equation}
\Phi_{x,K,\alpha=0}\left(\vec{q}_{0},\eta\right)=H_{8}\left(\vec{q}_{0},\eta,x\right)-\ln2-2H_{2}\left(x\right)\leq0\quad\forall\left(\vec{q}_{0},\eta\right)\in\tilde{V}_{x}.\label{eq:ineqAlpha0}
\end{equation}

Now, let us split $\tilde{V}_{x}$ in the following way: 
\[
\tilde{V}_{x}^{+}:=\left\{ \left(\vec{q}_{0},\eta\right)\in\tilde{V}_{x}\mid f_{2}\left(\vec{q}_{0},x,K\right)>f_{1}^{2}\left(x,K\right)\right\} \quad\text{and\ensuremath{\quad}\ensuremath{\tilde{V}_{x}^{-}}:=\ensuremath{\left\{  \left(\vec{q}_{0},\eta\right)\in\tilde{V}_{x}\mid f_{2}\left(\vec{q}_{0},x,K\right)\leq f_{1}^{2}\left(x,K\right)\right\} } }.
\]
 It follows that for all $\left(\vec{q}_{0},\eta\right)\in\tilde{V}_{x}^{-}$
and $\alpha>0$ we have 
\begin{align*}
\Phi_{x,K,\alpha}\left(\vec{q}_{0},\eta\right) & \leq\Phi_{x,K,\alpha=0}\left(\vec{q}_{0},\eta\right)\leq0.
\end{align*}
As already discussed, $\alpha_{LB}$ is the largest value of $\alpha$
such that 
\[
\max_{\left(\vec{q}_{0},\eta\right)\in\tilde{V}_{x}}\Phi_{x,K,\alpha}\left(\vec{q}_{0},\eta\right)=0.
\]
From the previous observation 
\[
\max_{\left(\vec{q}_{0},\eta\right)\in\tilde{V}_{x}}\Phi_{x,K,\alpha}\left(\vec{q}_{0},\eta\right)=\sup_{\left(\vec{q}_{0},\eta\right)\in\tilde{V}_{x}^{+}}\Phi_{x,K,\alpha}\left(\vec{q}_{0},\eta\right),
\]
and therefore $\alpha_{LB}$ is the largest value of $\alpha$ such
that 
\[
\sup_{\left(\vec{q}_{0},\eta\right)\in\tilde{V}_{x}^{+}}\Phi_{x,K,\alpha}\left(\vec{q}_{0},\eta\right)=0.
\]
Then, $\alpha_{LB}$ is the largest value of $\alpha$ such that there
exists no $\left(\vec{q}_{0},\eta\right)\in\tilde{V}_{x}^{+}$ with
$\Phi_{x,K,\alpha}\left(\vec{q}_{0},\eta\right)>0$, which is true
if and only if
\begin{equation}
H_{8}\left(\vec{q}_{0},\eta,x\right)-\ln2-2H_{2}\left(x\right)+\alpha\ln f_{2}\left(\vec{q}_{0},x,K\right)-2\alpha\ln f_{1}\left(x,K\right)\leq0\quad\forall\ \left(\vec{q}_{0},\eta\right)\in\tilde{V}_{x}^{+},\quad\forall\ \alpha\leq\alpha_{LB}.\label{eq:condLB}
\end{equation}
Therefore, eq.~(\ref{eq:condLB}) implies that for $\alpha\leq\alpha_{LB}$
the following condition must hold as well:

\begin{equation}
\alpha\leq\frac{\ln2+2H_{2}\left(x\right)-H_{8}\left(\vec{q}_{0},\eta,x\right)}{\ln f_{2}\left(\vec{q}_{0},x,K\right)-2\ln f_{1}\left(x,K\right)}\quad\forall\ \left(\vec{q}_{0},\eta\right)\in\tilde{V}_{x}^{+}.
\end{equation}
We obtain the following result:

\begin{lb}For each $K$ and $0<x<1$, and for all $\alpha$ such
that 

\begin{equation}
\alpha<\alpha_{LB}\left(x,K\right)\equiv\inf_{\left(\vec{q}_{0},\eta\right)\in\tilde{V}_{x}^{+}}\frac{\ln2+2H_{2}\left(x\right)-H_{8}\left(\vec{q}_{0},\eta,x\right)}{\ln f_{2}\left(\vec{q}_{0},x,K\right)-2\ln f_{1}\left(x,K\right)}\label{eq:aLB}
\end{equation}
we have that there is a positive probability of finding SAT-$x$-pairs
of solutions, namely

\begin{equation}
\liminf\limits _{N\rightarrow\infty}\mathbb{P}\left[\mathcal{Z}_{y=2}\left(x,K,\boldsymbol{\xi}\right)>0\right]>0.
\end{equation}
\end{lb}

The optimization can be simplified further by slicing the set $\tilde{V}_{x}^{+}$
in the two ``directions'' $\vec{q}_{0}$ and $\eta$. We define
a $\vec{q}_{0}$-slice as $\left(\tilde{V}_{x}^{+}\right)_{\vec{q}_{0}}:=\left\{ \eta\mid\left(\vec{q}_{0},\eta\right)\in\tilde{V}_{x}^{+}\right\} $
and the natural projection of the set $\tilde{V}_{x}^{+}$ on the
$\vec{q}_{0}$-subspace as $\pi_{\vec{q}_{0}}\left(\tilde{V}_{x}^{+}\right)=\left\{ \vec{q}_{0}\mid\exists\,\eta\ \textrm{s.t. }\left(\vec{q}_{0},\eta\right)\in\tilde{V}_{x}^{+}\right\} $.
With this notation, eq.~(\ref{eq:aLB}) becomes:

\begin{equation}
\alpha_{LB}\left(x,K\right)=\inf_{\vec{q}_{0}\in\pi_{\vec{q}_{0}}\left(\tilde{V}_{x}^{+}\right)}\frac{\ln2+2H_{2}\left(x\right)-\sup_{\eta\in\left(\tilde{V}_{x}^{+}\right)_{\vec{q}_{0}}}H_{8}\left(\vec{q}_{0},\eta,x\right)}{\ln f_{2}\left(\vec{q}_{0},x,K\right)-2\ln f_{1}\left(x,K\right)}.\label{eq:aLB2}
\end{equation}

The optimization in $\eta$ is easy because the function $H_{8}\left(\vec{q}_{0},\eta\right)$
is concave in $\eta$ for each $\vec{q}_{0}$. This is not necessarily
true for the optimization in $\vec{q}_{0}$. In fact, it is crucial
that we find the global optimum of the objective function because
this gives the correct value for the lower bound. To this purpose
we have devised two computational strategies. First we evaluated the
objective function on a $4$-dimensional grid with increasing number
of points. Then we have also implemented a simple gradient descent
starting from the points of the grid. The different strategies are
discussed in Appendix~\ref{sec:Numerical-optimization}.

\begin{figure}
\begin{centering}
\includegraphics[width=1\columnwidth]{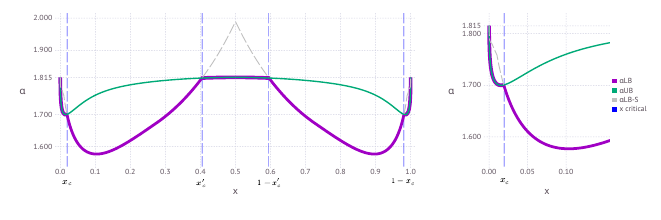}
\par\end{centering}
\caption{\label{fig:fig2UBLB}Lower and upper bounds for the RBP with $K=1$.
(Left) Lower and upper bounds on the whole range $x\in\left[0,1\right]$.
These bounds are symmetric around the vertical axis that passes by
$x=0.5$. In correspondence of the $SB$ solution, the lower bound
prediction from the $S$ point (gray line) is larger than the upper
bound and therefore patently wrong. This is what happens in the regions
$x\protect\leq x_{c}$, $x\protect\geq1-x_{c}$ and $x_{c}^{\prime}\protect\leq x\protect\leq1-x_{c}^{\prime}$,
where the two critical values $x_{c}\simeq0.195\dotsc$ and $x_{c}^{\prime}\simeq0.405\dotsc$
are highlighted by the blue vertical lines on the left of the symmetry
axis. In the regions $x_{c}\protect\leq x\protect\leq x_{c}^{\prime}$
and $1-x_{c}^{\prime}\protect\leq x\protect\leq1-x_{c}$, there is
a gap between the lower (purple line) and the upper bound (green line)
where the $S$ solution is indeed valid. (Right) Zoom of the figure
on the left, in the region around $x_{c}$. Here, for $x\protect\leq x_{c}$
the $S$ solution fails. This is evident from the fact that the symmetric
lower bound becomes bigger than the upper bound (gray line). In this
region instead, the true lower bound perfectly matches the upper bound
since the optimum of eq.~(\ref{eq:y4fm}) is in correspondence of
the $SB$ solution. }
\end{figure}

The bounds that we obtain in fig.~\ref{fig:fig2UBLB} are symmetric
around the value $x=0.5$ and there are two critical values $x_{c},x_{c}^{\prime}\in\left[0,0.5\right]$
(plus the symmetric ones, $1-x_{c}$ and $1-x_{c}^{\prime}$) that
delimit regions characterized by two different phases. For values
of $x$ such that $x_{c}\leq x\leq x_{c}^{\prime}$ or $1-x_{c}^{\prime}\leq x\leq1-x_{c}$
all four entries of $\vec{q}_{0}$ take the same value. We use the
subscript $S$ to denote this kind of solution. Instead for $x\leq x_{c}$,
$x\geq1-x_{c}$ and $x_{c}^{\prime}\leq x\leq1-x_{c}^{\prime}$, this
symmetry is broken and the optimum is achieved on a different point
that we call Symmetry Broken ($SB$) solution. This point has the
property that the two pairs of binary vectors of solutions $\left(\mathbf{w}_{1},\mathbf{w}_{2}\right)$
and $\left(\mathbf{\tilde{w}}_{1},\mathbf{\tilde{w}}_{2}\right)$
coincide, as can be seen from the structure of the covariance matrix.
We report below the covariance structure of the two solutions, where
we adopted the convention $q_{1}:=1-2x$. The symmetric covariance
matrix is the following: 

\begin{equation}
\Sigma_{S}=\left(\begin{array}{cccc}
1 & q_{1} & q_{0} & q_{0}\\
q_{1} & 1 & q_{0} & q_{0}\\
q_{0} & q_{0} & 1 & q_{1}\\
q_{0} & q_{0} & q_{1} & 1
\end{array}\right),
\end{equation}
while the one corresponding to point $SB$ is the following:

\begin{equation}
\Sigma_{SB}=\left(\begin{array}{cccc}
1 & q_{1} & 1 & q_{1}\\
q_{1} & 1 & q_{1} & 1\\
1 & q_{1} & 1 & q_{1}\\
q_{1} & 1 & q_{1} & 1
\end{array}\right).\label{eq:pointA}
\end{equation}
This is a degenerate covariance matrix, and in correspondence of the
$SB$ solution it follows from the previous equations that the lower
bound and the upper bound coincide.

The physical meaning of what happens is qualitatively different for
these two phases. Let us take $x<x_{c}$, where the bounds are tight,
and let's start with low $\alpha$ and progressively increase it.
In this regime the typical overlap between pairs of solutions is zero,
i.e. the two pairs of solutions are independent and there is a positive
probability of finding SAT-$x$-pairs since we are below $\alpha_{LB}$.
When we reach $\alpha=\alpha_{LB}=\alpha_{UB}$ there is a transition
to a regime where w.h.p. there exists no pair of solutions to the
problem. When this happens the point $\left(\vec{q}_{0}^{M},\eta^{M}\right)\in\tilde{V}_{x}$
that optimizes (\ref{defphi}) is the $SB$ point. For $x_{c}<x<x_{c}^{\prime}$,
the bounds are no longer tight and we can only identify  a region
between the two bounds where a SAT/UNSAT transition occurs. Again,
for $x_{c}^{\prime}<x\leq0.5$ the bounds are tight. For $x>0.5$
the behavior is symmetric to the one that we have just described.

\section{Multiplets of solutions ($y>2$)\label{sec:Caseygen}}

In the previous section we were able to derive rigorous expressions
for the upper bound $\alpha_{UB}\left(x\right)$, in eq.~(\ref{eq:aUB}),
and the lower bound $\alpha_{LB}\left(x\right)$, in eq.~(\ref{eq:aLB2}),
obtained by first and second moment calculations, such that w.h.p.
no pairs of solutions at distance $x$ exist for load $\alpha>\alpha_{UB}\left(x\right)$
and at least one pair exists for $\alpha<\alpha_{LB}\left(x\right)$.
It would be then natural to try to generalize the derivation to sets
of $y$ solutions at pairwise distance $x$ (multiplets) and in particular
asses the existence of a small $\alpha$ regime where such sets can
be found for any value of $y$ and for small enough $x$. This result
would rigorously confirm the existence of a dense region of solutions
as derived in sec.~\ref{sec:large-dev}, which in turn has been non-rigorously
advocated as a necessary condition for the existence of efficient
learning algorithms \citep{baldassi2015subdominant}.

Unfortunately, it is technically unfeasible to carry out the rigorous
derivation for $y>2$ as we have done above for the case $y=2$. Therefore,
in this section, after giving an rigorous expression for the first
moment upper bound limited to the cases $y=3$ and $y=4$, we will
derive compact expressions for the first and second moment bound using
non-rigorous field theoretical calculations and a replica symmetric
ansatz. We find that the non-rigorous results match the rigorous ones
when available, although we expect the prediction to break down at
large values of $y$ due to replica symmetry breaking effects (see
the discussion in the Introduction).

\subsection{Rigorous first moment upper bounds}

In the following we derive the rigorous expressions for the first
moment bound in two additional cases: the existence of triplets and
quadruplets of solutions at fixed pairwise distance $x$.

\subsubsection{Triplets ($y=3$)}

Let us define the symbol $\cong$ as equivalence up to sub-exponential
terms as $N\to\infty$, that is for any two sequences $\left(a_{N}\right)_{N}$
and $\left(b_{N}\right)_{N}$ we write $a_{N}\cong b_{N}$ iff $\lim_{N\to+\infty}\frac{\ln a_{N}}{\ln b_{N}}=1$.
The first moment of the triplets partition function has the following
asymptotic expression:

\begin{align}
\mathbb{E}\left[\mathcal{Z}_{y=3}\left(x,K,\boldsymbol{\xi}\right)\right] & \cong2^{N}\tbinom{N}{\frac{Nx}{2},\frac{Nx}{2},\frac{Nx}{2},N\left(1-\frac{3}{2}x\right)}\mathbb{P}\left[v_{1}\in I_{K},v_{2}\in I_{K},v_{3}\in I_{K}\right]^{M}\\
 & \cong e^{N\left(\ln\left(2\right)+H_{4}\left(x\right)+\alpha\ln f_{1}^{y=3}\left(x,K\right)\right)},\nonumber 
\end{align}
where $H_{4}\left(x\right)=-\frac{3}{2}x\ln\left(\frac{x}{2}\right)-\left(1-\frac{3}{2}x\right)\ln\left(1-\frac{3}{2}x\right)$
and we get the geometric condition $0<x<\frac{2}{3}$, and $f_{1}^{y=3}\left(x,K\right)$
is the probability that a zero mean Gaussian random vector $\vec{v}_{3}=\left(v_{1},v_{2},v_{3}\right)$,
whose covariance matrix $\Sigma_{3}$ has ones on the diagonal and
$1-2x$ off-diagonal, takes values in the box $\left[-K,K\right]^{3}$,
that is

\begin{align}
f_{1}^{y=3}\left(x,K\right) & =\frac{1}{\left(2\pi\right)^{\frac{3}{2}}\left|\Sigma_{3}\right|^{1/2}}\int_{\left[-K,K\right]^{3}}dv_{1}dv_{2}dv_{3}\ e^{-\vec{v}_{3}^{T}\Sigma_{3}^{-1}\vec{v}_{3}}.
\end{align}

An equivalent argument to the case $y=2$ gives the following upper
bound for the existence of clusters of three solutions:

\begin{equation}
\alpha_{UB}^{y=3}\left(x,K\right)=-\frac{\ln2+H_{4}\left(x\right)}{\ln f_{1}^{y=3}\left(x,K\right)}.
\end{equation}

\subsubsection{Quadruplets ($y=4$)}

For quadruplets of solutions, we have

\begin{align}
\mathbb{E}\left[\mathcal{Z}_{y=4}\left(x,K,\boldsymbol{\xi}\right)\right] & \cong2^{N}\sum_{\mathbf{a}\in V_{N,x}^{y=4}\cap\left\{ 0,1/N,2/N,\ldots,1\right\} ^{8}}\frac{N!}{\prod_{i=0}^{7}\left(Na_{i}\right)!}\left[f_{1}^{y=4}\left(x,K\right)\right]^{M},\label{eq:y4fm}
\end{align}
where:
\begin{itemize}
\item In complete analogy with the previous case $f_{1}^{y=4}\left(x,K\right)$
is the probability that a zero mean Gaussian random vector $\vec{v}_{4}=\left(v_{1},v_{2},v_{3},v_{4}\right)$,
whose covariance matrix $\Sigma_{4}$ has ones on the diagonal and
$1-2x$ off-diagonal, takes values in the box $\left[-K,K\right]^{4}$,
that is
\begin{equation}
f_{1}^{y=4}\left(x,K\right)=\frac{1}{\left(2\pi\right)^{2}\left|\Sigma_{4}\right|^{1/2}}\int_{\left[-K,K\right]^{4}}d\vec{v}_{4}\ e^{-\vec{v}_{4}^{T}\Sigma_{4}^{-1}\vec{v}_{4}}.
\end{equation}
\item The summation is restricted to the set $V_{N,x}^{y=4}\subseteq\left[0,1\right]^{8}$,
specified by: 
\begin{equation}
\left\{ \begin{array}{l}
\lfloor N\left(a_{4}+a_{5}+a_{6}+a_{7}\right)\rfloor=\lfloor Nx\rfloor\\
\lfloor N\left(a_{1}+a_{2}+a_{5}+a_{6}\right)\rfloor=\lfloor Nx\rfloor\\
\lfloor N\left(a_{2}+a_{3}+a_{6}+a_{7}\right)\rfloor=\lfloor Nx\rfloor\\
\lfloor N\left(a_{1}+a_{3}+a_{5}+a_{7}\right)\rfloor=\lfloor Nx\rfloor\\
\lfloor N\left(a_{2}+a_{3}+a_{4}+a_{5}\right)\rfloor=\lfloor Nx\rfloor\\
\lfloor N\left(a_{1}+a_{3}+a_{4}+a_{6}\right)\rfloor=\lfloor Nx\rfloor\\
\sum_{i=0}^{7}a_{i}=1
\end{array}\right..\label{eq:constr_y4}
\end{equation}
\end{itemize}
In the limit $N\to\infty$, due to the 7 constraints in eq.~(\ref{eq:constr_y4}),
the summation over elements in the box $\left[0,1\right]^{8}$ in
eq.~(\ref{eq:y4fm}) can be replaced by an integral over the interval
\begin{equation}
\mathcal{B}_{x}\equiv\left[0,\,\min\left\{ \frac{x}{2},1-\frac{3}{2}x\right\} \right]\qquad\text{for \ensuremath{x<\frac{2}{3}}},
\end{equation}

while for $x>\frac{2}{3}$ the constraints admit no solutions and
$\mathbb{E}\left[\mathcal{Z}_{y=4}\left(x,K,\boldsymbol{\xi}\right)\right]\cong0$.

Therefore, for $x<\frac{2}{3}$, we can write 
\begin{align*}
\mathbb{E}\left[\mathcal{Z}_{y=4}\left(x,K,\boldsymbol{\xi}\right)\right] & \cong2^{N}\int_{\mathcal{B}_{x}}db\binom{N}{N\left(1-b-\frac{3}{2}x\right),Nb,Nb,N\left(\frac{x}{2}-b\right),Nb,N\left(\frac{x}{2}-b\right),N\left(\frac{x}{2}-b\right),Nb}f_{1}^{y=4}\left(x,K\right)\\
 & \cong\int_{\mathcal{B}_{x}}db\ e^{N\left(\ln2+H_{8}\left(x,b\right)+\ln f_{1}^{y=4}\left(x,K\right)\right)}\\
 & \cong e^{N\left(\ln2+H_{8}\left(x,b^{*}(x)\right)+\ln f_{1}^{y=4}\left(x\right)\right)},
\end{align*}
where in the last line we estimated the integral with its saddle point
contribution at $b^{\star}\left(x\right)=\argmax_{b\in\mathcal{B}_{x}}H_{8}\left(x,b\right)$.
The function $H_{8}\left(x,b\right)$ is the Shannon entropy of an
eight-states discrete probability distribution with masses given by
the components of the vector $\left(1-b-3/2\ x,b,b,x/2-b,b,x/2-b,x/2-b,b\right)$.
It follows that the first moment upper bound to the storage capacity
for quadruplets of solutions at a fixed distance $x$ is given by
\[
\alpha_{UB}^{y=4}\left(x,K\right)=-\frac{\ln2+H_{8}\left(x,b^{\star}\left(x\right)\right)}{\ln f_{1}^{y=4}\left(x,K\right)}.
\]

The numerical evaluation of the two upper bounds, $y=3$ and $y=4$,
can be found in fig.~\ref{fig:Upper-bounds-for} along with the predictions
for the upper bound from non-rigorous calculations for larger $y$'s.

\subsection{Upper bounds under symmetric assumption for saddle point}

Since a rigorous expression for the upper bound $\alpha_{UB}^{y}\left(x,K\right)$
for $y>4$ is hard to derive, due to highly non-trivial combinatorial
factors, we resort to non-rigorous field theoretical techniques and
replica symmetric ansatz to obtain an expression that we believe to
be exact for low values of $y$ but is likely slightly incorrect for
very large $y$ due to replica symmetry breaking effects. The generic
computation of the $n$-th moment of the partition function, $\mathbb{E}\left[\mathcal{Z}_{y}^{n}\right]$,
is shown in Appendix~\ref{sec:n-th-moment-using}. Here we present
the final result for the first moment bound, i.e. the case $n=1$.

In what follows, we denote with $\mathrm{SP}$ the saddle point operation,
and we use the overlap between solutions $q_{1}\equiv1-2x$ as our
control parameter instead of the distance $x$ to match the usual
notation of replica theory calculations. Up to subleading terms in
$N$ as $N\to\infty$ we have: 

\global\long\def\sp{\underset{\hat{q}_{1}}{\text{\ensuremath{\mathrm{SP}}}}}%

\begin{eqnarray*}
\mathbb{E}\left[\mathcal{Z}_{y}\left(q_{1},K,\boldsymbol{\xi}\right)\right] & \cong & e^{N\left(\sp\left\{ G_{IS}^{n=1,y}\left(q_{1},\hat{q}_{1}\right)\right\} +\alpha G_{E}^{n=1,y,K}\left(q_{1}\right)\right)},
\end{eqnarray*}
where 
\begin{eqnarray*}
G_{IS}^{n=1,y}\left(q_{1},\hat{q}_{1}\right) & = & -q_{1}\hat{q}_{1}\frac{y\left(y-1\right)}{2}-\frac{\hat{q}_{1}y}{2}+\ln\int Dt\left(2\cosh\left(t\sqrt{\hat{q}_{1}}\right)\right)^{y}\\
G_{E}^{n=1,y,K}\left(q_{1}\right) & = & \ln\int Dz\left[\sum_{s=\pm1}s\,H\left(\frac{-s\,K}{\sqrt{1-q_{1}}}+\frac{\sqrt{q_{1}}z}{\sqrt{1-q_{1}}}\right)\right]^{y}
\end{eqnarray*}
where we have used the shorthand notation for standard Gaussian integrals
$Dz\equiv dz\,\frac{e^{-\frac{x^{2}}{2}}}{\sqrt{2\pi}}$, and the
definition $H\left(x\right)=\int_{x}^{\infty}Dz=\frac{1}{2}\mathrm{erfc}\left(\frac{x}{\sqrt{2}}\right)$.

The first moment bound therefore implies that in the limit $N\to\infty$
there are no SAT-$x$ multiplets of $y$ solutions if

\begin{equation}
\sp\left\{ G_{IS}^{n=1,y}\left(q_{1},\hat{q}_{1}\right)\right\} +\alpha G_{E}^{n=1,y,K}\left(q_{1}\right)<0.
\end{equation}

This leads to an estimation $\alpha_{UB,S}^{y}$ given by the symmetric
saddle point of the true upper bound $\alpha_{UB}^{y}$ that takes
the form

\begin{equation}
\alpha_{UB,S}^{y}\left(q_{1},K\right)\equiv-\frac{\sp\left\{ G_{IS}^{n=1,y}\left(q_{1},\hat{q}_{1}\right)\right\} }{G_{E}^{n=1,y,K}\left(q_{1}\right)}.\label{eq:UB_symm}
\end{equation}

These expressions are derived under a symmetric ansatz (i.e. we restrict
the search for the saddle point to a particular subset of the region
of integration) and thus are not rigorous; yet the results in the
cases $y=2,3,4$ agree with the rigorous ones. The corresponding curves
are shown in fig.~\ref{fig:Upper-bounds-for}. Notice that for some
values and $y$ and $x$, the second moment upper bound is larger
than the critical value for the single (and less constrained) system
, $\alpha_{UB}^{y}\left(x\right)>\alpha_{c}$. Since the replicated
system critical value, if exist, is such that $\alpha_{c}^{y}\left(x\right)\leq\alpha_{c}$
, in that parameter region the upper bound is not tight.

\begin{figure}
\begin{centering}
\includegraphics[width=1\columnwidth]{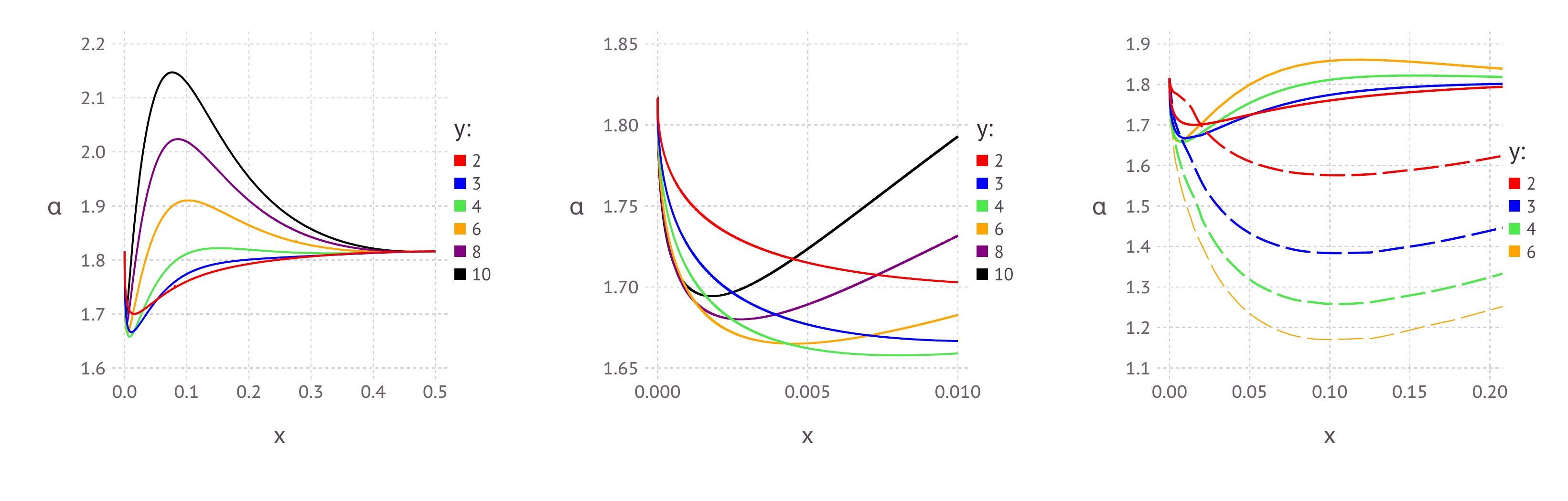}
\par\end{centering}
\caption{\label{fig:Upper-bounds-for}(Left) Upper bound $\alpha_{UB}^{y}\left(x,K=1\right)$
to the SAT/UNSAT threshold for the RBP problem with $y$ replicas
constrained at pairwise distance $x$. Curves are given by rigorous
derivation ($y=2,3,4)$ or by non-rigorous field theoretical calculations
(\ref{eq:UB_symm}) ($y>4$) . (Center) Zoom of the figure on the
left. Close to $x=0$ the curves corresponding to different $y$ intersect.
(Right) The upper bounds (solid lines) are compared to the $S$ point
predictions (\ref{eq:LB_symm}) for the lower bounds (dashed lines). }
\end{figure}

As one can see, the curves intersect in a nontrivial way. Let's take
for example the curves for $y=2$ and $y=3$. If the bounds were tight
for all values of $x$, the curve at $y=3$ should always stay below
the curve for $y=2$. This follows directly from the fact that if
we have no way of accommodating pairs of solutions then we do not
have a way to accommodate triplets solutions either. Instead, the
fact that the curves intersect means that for values of $x$ smaller
than the intersection point the bounds stop being tight. This straightforward
argument, generalized to higher values of $y$, therefore we can define
a tighter upper bound, that we call $\tilde{\alpha}_{UB}^{y}\left(x,K\right),$
for the existence of sets of $y$ constrained solutions:

\begin{equation}
\tilde{\alpha}_{UB}^{y}\left(x,K\right)=\min\left\{ \alpha_{UB}^{y^{\prime}}\left(x,K\right)\,:\,y^{\prime}\in\mathbb{N},\,2\leq y^{\prime}\leq y\right\} .\label{eq:supp}
\end{equation}

\subsection{Lower bounds under symmetric assumption for the saddle point}

We compute the partition function moments needed for the lower bounds
in Appendix~\ref{sec:n-th-moment-using}. The final result of the
replica calculation is given by

\global\long\def\spp{\underset{q_{0}\ \hat{q}_{0}\ \hat{q}_{1}}{\text{\ensuremath{\mathrm{SP}}}}}%

\global\long\def\sqz{\underset{q_{0}}{\text{\ensuremath{\mathrm{SP}}}}}%

\begin{eqnarray}
\mathbb{E}\left[\mathcal{Z}_{y}^{2}\left(q_{1},K,\boldsymbol{\xi}\right)\right] & \cong & \exp\left(N\cdot\spp\Biggl\{-\hat{q}_{1}y-y\left[yq_{0}\hat{q}_{0}+\left(y-1\right)q_{1}\hat{q}_{1}\right]\right.\nonumber \\
 &  & +\ln\int Dz\left[\int Dt\left(2\cosh\left(\sqrt{\hat{q}_{0}}z+\sqrt{\hat{q}_{1}-\hat{q}_{0}}t\right)\right)^{y}\right]^{2}\nonumber \\
 &  & +\ \alpha\ln\int Dz\left[\int Dt\left[\sum_{s=\pm1}s\,H\left(\frac{-s\,K}{\sqrt{1-q_{1}}}+\frac{\sqrt{q_{0}}z+\sqrt{q_{1}-q_{0}}t}{\sqrt{1-q_{1}}}\right)\right]^{y}\right]^{2}\\
 & = & \exp\left(N\cdot\spp\left\{ G_{I}^{n=2,y}\left(q_{0},\hat{q}_{0},q_{1},\hat{q}_{1}\right)+G_{S}^{n=2,y}\left(\hat{q}_{0},\hat{q}_{1}\right)+\alpha G_{E}^{n=2,y,K}\left(q_{0},q_{1}\right)\right\} \right),
\end{eqnarray}
where

\begin{eqnarray}
G_{I}^{n=2,y}\left(q_{0},\hat{q}_{0},q_{1},\hat{q}_{1}\right) & = & -\hat{q}_{1}y-y\left[yq_{0}\hat{q}_{0}+\left(y-1\right)q_{1}\hat{q}_{1}\right]\\
G_{S}^{n=2,y}\left(\hat{q}_{0},\hat{q}_{1}\right) & = & \ln\int Dz\left[\int Dt\left(2\cosh\left(\sqrt{\hat{q}_{0}}z+\sqrt{\hat{q}_{1}-\hat{q}_{0}}t\right)\right)^{y}\right]^{2}\\
G_{E}^{n=2,y,K}\left(q_{0},q_{1}\right) & = & \ln\int Dz\left[\int Dt\left[\sum_{s=\pm1}s\,H\left(\frac{-s\,K}{\sqrt{1-q_{1}}}+\frac{\sqrt{q_{0}}z+\sqrt{q_{1}-q_{0}}t}{\sqrt{1-q_{1}}}\right)\right]^{y}\right]^{2}.\label{eq:RSAn2y}
\end{eqnarray}
Performing the saddle points over the variables $\hat{q}_{0}$ and
$\hat{q}_{1}$, these expressions reduce to 
\begin{eqnarray}
\mathbb{E}\left[\mathcal{Z}_{y}^{2}\left(q_{1},K,\boldsymbol{\xi}\right)\right] & \simeq & e^{N\left(\max_{q_{0}}\left\{ G_{IS}^{\mathrm{opt},n=2,y}\left(q_{0},q_{1}\right)+\alpha G_{E}^{n=2,y,K}\left(q_{0},q_{1}\right)\right\} \right)},\label{eq:q00}
\end{eqnarray}
where
\begin{equation}
G_{IS}^{\mathrm{opt},n=2,y}\left(q_{0},q_{1}\right)=\underset{\hat{q}_{0}\ \hat{q}_{1}}{\text{\ensuremath{\mathrm{SP}}}}\left\{ G_{I}^{n=2,y}\left(q_{0},\hat{q}_{0},q_{1},\hat{q}_{1}\right)+G_{S}^{n=2,y}\left(\hat{q}_{0},\hat{q}_{1}\right)\right\} .
\end{equation}

For fixed $\alpha$, if the optimum in eq.~(\ref{eq:q00}) is at
$q_{0}=0$ we have $\mathbb{E}\left[\mathcal{Z}_{y}^{2}\left(q_{1},K,\boldsymbol{\xi}\right)\right]\cong\mathbb{E}\left[\mathcal{Z}_{y}\left(q_{1},K,\boldsymbol{\xi}\right)\right]^{2}$
and from the second moment inequality, eq.~(\ref{eq:secmom}), we
have that there is positive probability of finding multiplets of $y$
solutions at distance $x=\frac{1}{2}\left(1-q_{1}\right).$ This in
turn implies that the lower bound is valid for all $\alpha$'s such
that $\argmax_{q_{0}}\left\{ G_{IS}^{opt,n=2,y}\left(q_{0},q_{1}\right)+\alpha G_{E}^{n=2,y,K}\left(q_{0},q_{1}\right)\right\} =0$.
In particular the symmetric saddle point prediction for the lower
bound is given by

\begin{equation}
\alpha_{LB,S}^{y}\left(q_{1}\right)=\sup\left\{ \alpha\geq0\,\bigg|\,\argmax_{q_{0}}\left\{ G_{IS}^{opt,n=2,y}\left(q_{0},q_{1}\right)+\alpha G_{E}^{n=2,y,K}\left(q_{0},q_{1}\right)\right\} =0\right\} .\label{eq:LB_symm}
\end{equation}
The results for $y=2,3,4,5$ are summarized in fig.~\ref{fig:Upper-bounds-for}
on the right. In fig.~\ref{fig:Bounds-for-different} we plot an
enlargement of the small-distance region around $x=0$, corresponding
to $q_{1}=1$. We find that in all cases there is an inconsistency
region $\text{\ensuremath{\left[0,x_{c}\left(y\right)\right]}}$ in
which the symmetric lower and upper bounds switch roles, similarly
to what happened in the case of $y=2$ (see fig.~\ref{fig:fig2UBLB}).
The true lower bound cannot thus be symmetric in this region: the
configuration in which the two SAT-$x$ multiplets of $y$ solutions
are collapsed on a single multiplet always gives a better saddle point,
resulting in a lower bound equal to the upper bound. We thus conjecture
that for $x<x_{c}\left(y\right)$ the bounds are tight, like in the
$y=2$ case. The symmetry of lower and upper bounds with respect to
$x$ on the interval $\left[0,1\right]$ (or the corresponding symmetry
for $q_{1}$) which holds for $y=2$ does not apply to general $y$.
In our numerical exploration presented in fig.~\ref{fig:Upper-bounds-for},
we focused on the region of small $x$. We also notice that the lower
bounds for increasing $y$'s decrease monotonically, and in the limit
$y\to\infty$ the limiting curve seem to exhibit a vertical asymptote
for $x=0$. Furthermore, the intersection point $x_{c}\left(y\right)$
seem to decrease monotonically with $y$ and to approach zero. It
is also worth noting that, for all the $y$ that we tested, we found
that in the region $\left[0,x_{c}\text{\ensuremath{\left(y\right)}}\right]$
we have $\tilde{\alpha}_{UB}^{y}=\alpha_{UB}^{y}$, which is consistent
with the conjecture that the bounds are tight in this region.

\begin{figure}[h]
\centering{}\includegraphics[width=1\columnwidth]{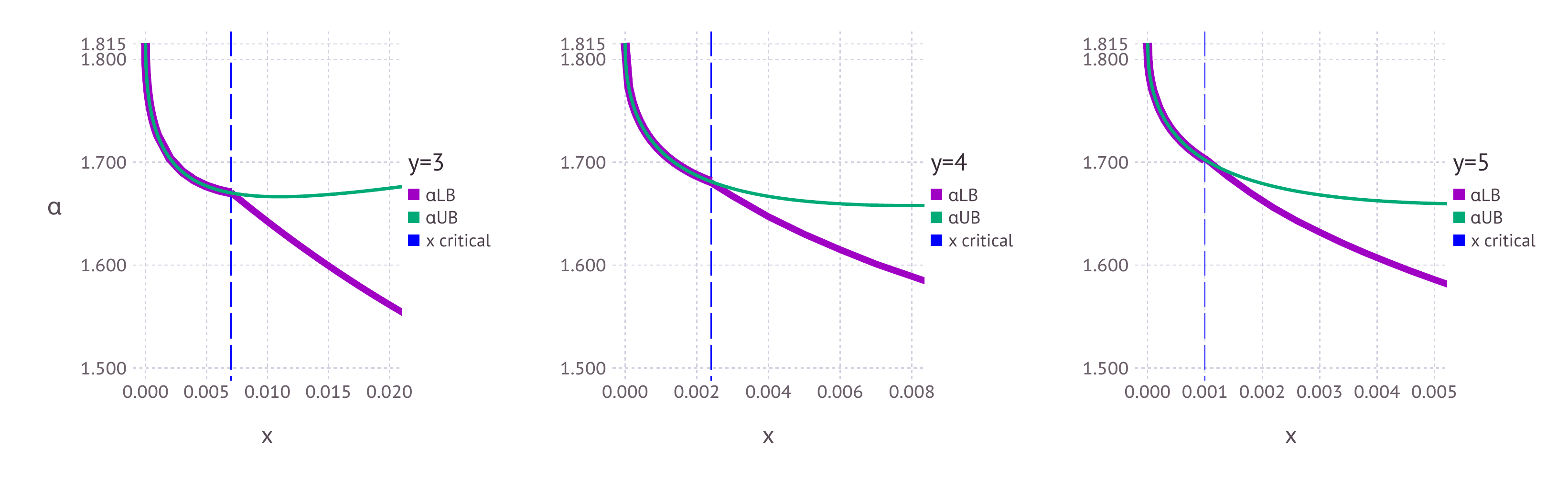}\caption{\label{fig:Bounds-for-different}Lower and upper bounds for the RBP
with $K=1$ and for different values of $y=3,4,5$, in the region
of small $x$. Like in the case of $y=2$, for $x$ larger than the
critical value $x_{c}\left(y\right)$ (blue vertical line) there is
a gap between the symmetric lower bound (purple line) and the upper
bound (green line). This gap closes in correspondence of the $SB$
solution for $x\protect\leq x_{c}\left(y\right)$ and the two bounds
coincide.}
\end{figure}

\section{Conclusions}

We have presented an investigation of the geometry of the solutions
space for the binary symmetric perceptron model storing random patterns.
According to the non-rigorous analysis conducted with the replica
method, this model exhibits the same qualitative phenomenology as
the more standard non-symmetric counterpart. In particular, we focused
on signatures for the presence of rare dense regions of solutions,
which are of particular interest since according to previous studies
they appear to be crucially connected to the existence of efficient
learning algorithms \citep{baldassi2015subdominant,baldassi2016unreasonable}.
The analogous structures for continuous models (of the kind used for
deep learning applications) are wide flat minima, which have also
been related to training efficiency and generalization capabilities
\citep{baldassi2020shaping}.

Compared to standard models, the symmetry in the model used for this
paper simplifies the analytical treatment, as was first shown in ref.~\citep{aubin2019storage}.
Thanks to this, we have been able to show rigorously (up to a numerical
optimization step) that in the large $N$ limit there exist an exponential
number of pairs of solution at arbitrary $O\left(N\right)$ Hamming
distance. A further analysis led us to conjecture that this scenario
extends to multiplets of more than 2 solutions at fixed distance.
These results are highly non-trivial, and consistent with the replica
analysis; a complete and rigorous confirmation will presumably require
different tools or alternative approaches however, and thus remains
as an open problem. Besides this, several other important problems
related to the dense regions, with potentially far-fetching practical
and theoretical implications, remain open: in particular, obtaining
a detailed description of their geometry, and a complete characterization
of their accessibility by efficient algorithms.

\bibliographystyle{unsrturl}
\phantomsection\addcontentsline{toc}{section}{\refname}\bibliography{bibliography}

\begin{thebibliography}{10}

\bibitem{gardner1988optimal}
Elizabeth Gardner and Bernard Derrida.
\newblock Optimal storage properties of neural network models.
\newblock {\em Journal of Physics A: Mathematical and General}, 21(1):271--284,
  jan 1988.
\newblock \href {http://dx.doi.org/10.1088/0305-4470/21/1/031}
  {\path{doi:10.1088/0305-4470/21/1/031}}.

\bibitem{watkin1993statistical}
Timothy L.~H. Watkin, Albrecht Rau, and Michael Biehl.
\newblock The statistical mechanics of learning a rule.
\newblock {\em Reviews of Modern Physics}, 65:499--556, Apr 1993.
\newblock \href {http://dx.doi.org/10.1103/RevModPhys.65.499}
  {\path{doi:10.1103/RevModPhys.65.499}}.

\bibitem{seung1992statistical}
Hyunjune~Sebastian Seung, Haim Sompolinsky, and Naftali Tishby.
\newblock Statistical mechanics of learning from examples.
\newblock {\em Physical Review A}, 45:6056--6091, Apr 1992.
\newblock \href {http://dx.doi.org/10.1103/PhysRevA.45.6056}
  {\path{doi:10.1103/PhysRevA.45.6056}}.

\bibitem{engel2001statistical}
Andreas Engel and Christian Van~den Broeck.
\newblock {\em Statistical mechanics of learning}.
\newblock Cambridge University Press, 2001.
\newblock \href {http://dx.doi.org/10.1017/CBO9781139164542}
  {\path{doi:10.1017/CBO9781139164542}}.

\bibitem{krauth1989storage}
Werner Krauth and Marc M{\'e}zard.
\newblock Storage capacity of memory networks with binary couplings.
\newblock {\em Journal de Physique}, 50(20):3057--3066, 1989.
\newblock \href {http://dx.doi.org/10.1051/jphys:0198900500200305700}
  {\path{doi:10.1051/jphys:0198900500200305700}}.

\bibitem{huang2013entropy}
Haiping Huang, K.~Y.~Michael Wong, and Yoshiyuki Kabashima.
\newblock Entropy landscape of solutions in the binary perceptron problem.
\newblock {\em Journal of Physics A: Mathematical and Theoretical},
  46(37):375002, aug 2013.
\newblock \href {http://dx.doi.org/10.1088/1751-8113/46/37/375002}
  {\path{doi:10.1088/1751-8113/46/37/375002}}.

\bibitem{huang2014origin}
Haiping Huang and Yoshiyuki Kabashima.
\newblock Origin of the computational hardness for learning with binary
  synapses.
\newblock {\em Physical Review E}, 90(5):052813, 2014.
\newblock \href {http://dx.doi.org/https://doi.org/10.1103/PhysRevE.90.052813}
  {\path{doi:https://doi.org/10.1103/PhysRevE.90.052813}}.

\bibitem{braunstein2006learning}
Alfredo Braunstein and Riccardo Zecchina.
\newblock Learning by message passing in networks of discrete synapses.
\newblock {\em Physical Review Letters}, 96:030201, Jan 2006.
\newblock \href {http://dx.doi.org/10.1103/PhysRevLett.96.030201}
  {\path{doi:10.1103/PhysRevLett.96.030201}}.

\bibitem{baldassi2007efficient}
Carlo Baldassi, Alfredo Braunstein, Nicolas Brunel, and Riccardo Zecchina.
\newblock Efficient supervised learning in networks with binary synapses.
\newblock {\em Proceedings of the National Academy of Sciences of the United
  States of America}, 104(26):11079--1084, 2007.
\newblock \href {http://dx.doi.org/10.1073/pnas.0700324104}
  {\path{doi:10.1073/pnas.0700324104}}.

\bibitem{baldassi2009generalization}
Carlo Baldassi.
\newblock Generalization learning in a perceptron with binary synapses.
\newblock {\em Journal of Statistical Physics}, 136(5):902--916, 2009.
\newblock \href {http://dx.doi.org/10.1007/s10955-009-9822-1}
  {\path{doi:10.1007/s10955-009-9822-1}}.

\bibitem{baldassi2015max}
Carlo Baldassi and Alfredo Braunstein.
\newblock A max-sum algorithm for training discrete neural networks.
\newblock {\em Journal of Statistical Mechanics: Theory and Experiment},
  2015(8):P08008, 2015.
\newblock \href {http://dx.doi.org/10.1088/1742-5468/2015/08/P08008}
  {\path{doi:10.1088/1742-5468/2015/08/P08008}}.

\bibitem{baldassi2015subdominant}
Carlo Baldassi, Alessandro Ingrosso, Carlo Lucibello, Luca Saglietti, and
  Riccardo Zecchina.
\newblock Subdominant dense clusters allow for simple learning and high
  computational performance in neural networks with discrete synapses.
\newblock {\em Phys. Rev. Lett.}, 115:128101, Sep 2015.
\newblock \href {http://dx.doi.org/10.1103/PhysRevLett.115.128101}
  {\path{doi:10.1103/PhysRevLett.115.128101}}.

\bibitem{baldassi2016local}
Carlo Baldassi, Alessandro Ingrosso, Carlo Lucibello, Luca Saglietti, and
  Riccardo Zecchina.
\newblock Local entropy as a measure for sampling solutions in constraint
  satisfaction problems.
\newblock {\em Journal of Statistical Mechanics: Theory and Experiment},
  2016(2):P023301, February 2016.
\newblock \href {http://dx.doi.org/10.1088/1742-5468/2016/02/023301}
  {\path{doi:10.1088/1742-5468/2016/02/023301}}.

\bibitem{baldassi2019shaping}
Carlo Baldassi, Fabrizio Pittorino, and Riccardo Zecchina.
\newblock Shaping the learning landscape in neural networks around wide flat
  minima.
\newblock {\em arXiv preprint arXiv:1905.07833}, 2019.
\newblock URL: \url{https://arxiv.org/abs/1905.07833}.

\bibitem{baldassi2019properties}
Carlo Baldassi, Enrico~M Malatesta, and Riccardo Zecchina.
\newblock Properties of the geometry of solutions and capacity of multilayer
  neural networks with rectified linear unit activations.
\newblock {\em Physical Review Letters}, 123(17):170602, 2019.
\newblock \href {http://dx.doi.org/10.1103/PhysRevLett.123.170602}
  {\path{doi:10.1103/PhysRevLett.123.170602}}.

\bibitem{ding2019capacity}
Jian Ding and Nike Sun.
\newblock Capacity lower bound for the ising perceptron.
\newblock In {\em Proceedings of the 51st Annual ACM SIGACT Symposium on Theory
  of Computing}, pages 816--827. ACM, 2019.
\newblock \href {http://dx.doi.org/10.1145/3313276.3316383}
  {\path{doi:10.1145/3313276.3316383}}.

\bibitem{aubin2019storage}
Benjamin Aubin, Will Perkins, and Lenka Zdeborov{\'{a}}.
\newblock Storage capacity in symmetric binary perceptrons.
\newblock {\em Journal of Physics A: Mathematical and Theoretical},
  52(29):294003, jun 2019.
\newblock \href {http://dx.doi.org/10.1088/1751-8121/ab227a}
  {\path{doi:10.1088/1751-8121/ab227a}}.

\bibitem{mezard2005clustering}
Marc M{\'e}zard, Thierry Mora, and Riccardo Zecchina.
\newblock Clustering of solutions in the random satisfiability problem.
\newblock {\em Physical Review Letters}, 94:197205, May 2005.
\newblock \href {http://dx.doi.org/10.1103/PhysRevLett.94.197205}
  {\path{doi:10.1103/PhysRevLett.94.197205}}.

\bibitem{daude2008pairs}
Herv{\'e} Daud{\'e}, Marc M{\'e}zard, Thierry Mora, and Riccardo Zecchina.
\newblock Pairs of sat-assignments in random boolean formul{\ae}.
\newblock {\em Theoretical Computer Science}, 393(1):260--279, 2008.
\newblock \href {http://dx.doi.org/10.1016/j.tcs.2008.01.005}
  {\path{doi:10.1016/j.tcs.2008.01.005}}.

\end{thebibliography}

\appendix

\section{$y\to\infty$ limit\label{subsec:Limit}}

\global\long\def\spmi{\underset{\delta q\hat{q}_{1}\delta\hat{q}}{\text{\ensuremath{\mathrm{SP}}}}}%

In this Section we derive the large $y$ limit of the entropy

\[
\phi_{y}\left(x,K,\alpha\right)=\lim_{N\to\infty}\frac{1}{yN}\mathbb{E}_{\boldsymbol{\xi}}\ln\mathcal{Z}_{y}\left(x,K,\boldsymbol{\xi}\right)
\]

within RS assumptions. For convenience of notation we will use the
overlap $q_{1}=1-2x$ instead of $x$. As explained in ref.~\citep{baldassi2020shaping},
the computation of $\phi_{y}\left(x\right)$ is formally equivalent
to that of a single replica in the 1RSB ansatz with Parisi parameter
$y$, except for the fact that $q_{1}$ is fixed externally instead
of being optimized as usual. We obtain the following entropy for the
Random Binary Perceptron (RBP) with $y$ real replicas:

\begin{align}
\phi_{y}\left(x,K,\alpha\right) & =\spp\left\{ -\frac{\hat{q}_{1}}{2}\left(1-q_{1}\right)+\frac{y}{2}\left(q_{0}\hat{q}_{0}-q_{1}\hat{q}_{1}\right)+\frac{1}{y}\int Dz_{0}\ln\int Dz_{1}\ \left[2\cosh\left(\sqrt{\hat{q}_{0}}z_{0}+\sqrt{\hat{q}_{1}-\hat{q}_{0}}z_{1}\right)\right]^{y}+\right.\\
+\frac{\alpha}{y} & \left.\int Dz_{0}\ \ln\int Dz_{1}\left[\sum_{s=\pm1}s\,H\left(\frac{-s\,K}{\sqrt{1-q_{1}}}+\frac{\sqrt{q_{0}}z_{0}+\sqrt{q_{1}-q_{0}}z_{1}}{\sqrt{1-q_{1}}}\right)\right]^{y}\right\} .
\end{align}
We want to take the limit $y\to\infty$ in the previous expression.
By looking at the entropic and energetic parts we derive the appropriate
scalings

\begin{equation}
\hat{q}_{0}=\hat{q}_{1}-\frac{\delta\hat{q}}{y},\quad q_{0}=q_{1}-\frac{\delta q}{y},
\end{equation}
and the previous equation becomes
\begin{equation}
\phi_{y=\infty}\left(q_{1},K,\alpha\right)=\spmi\left\{ -\frac{\hat{q}_{1}}{2}\left(1-q_{1}\right)-\frac{1}{2}\left(\delta q\hat{q}_{1}+\delta\hat{q}q_{1}\right)+\int Dz_{0}\ A^{\star}\left(z_{0}\right)+\alpha\int Dz_{0}\ B^{\star}\left(z_{0}\right)\right\} ,
\end{equation}
where

\begin{align}
A^{\star}\left(z_{0}\right) & =\ln2-\min_{z_{1}}\left\{ \frac{z_{1}^{2}}{2}-\ln\cosh\left(\sqrt{\hat{q}_{1}}z_{0}+\sqrt{\delta\hat{q}}z_{1}\right)\right\} ,\\
B^{\star}\left(z_{0}\right) & =-\min_{z_{1}}\left\{ \frac{z_{1}^{2}}{2}-\ln\left[\sum_{s=\pm1}s\,H\left(\frac{-s\,K}{\sqrt{1-q_{1}}}+\frac{\sqrt{q_{1}}z_{0}+\sqrt{\delta q}z_{1}}{\sqrt{1-q_{1}}}\right)\right]\right\} .
\end{align}
The results are shown in fig.~\ref{fig:clusterone}. The behavior
of $\phi_{y=\infty}\left(q_{1}\right)$ close to $q_{1}=1$, where
it approaches the maximum volume curve, reveals the existence of a
dense cluster of solutions. Furthermore, the maximum volume curve
coincides with the curve for $\alpha=0$, which means that there are
no constraints to impose and the function $\phi_{y=\infty}\left(q_{1},K,0\right)=H_{2}\left(\left(1+\sqrt{q_{1}}\right)/2\right).$
We expect the value obtained within the RS ansatz for $\phi_{y=\infty}\left(q_{1},K,\alpha\right)$
to not be the correct one, at least for $\alpha$ above some critical
value where spin glass instabilities arise. In fact $\phi_{y=\infty}\left(q_{1},K,\alpha\right)$
yields a SAT/UNSAT transition that is wrong, since it is above the
known one for the standard $y=1$ model. Therefore this scenario should
be checked within a 1RSB calculation, where we also expect the dense
cluster prediction to remain true. We refer to \citep{baldassi2015subdominant}
for an in-depth analysis of a similar model which takes also into
account replica symmetry breaking corrections. 

\section{Derivation of the lower bound}

\subsection{Change of integration variables in second moment bound\label{subsec:Change-of-integration}}

The bound in eq.~(\ref{eq:boundsm}) depends on the 8 variables $\mathbf{a}$.
We want now to reduce the number of from 8 to 5 using the constraints
in eq\@.~(\ref{eq:const}). We choose to write $a_{0}$, $a_{6}$,
$a_{7}$ as functions of the other variables

\begin{equation}
\left\{ \begin{array}{l}
a_{0}=1-a_{1}-a_{2}-a_{3}-x\\
a_{6}=x-a_{1}-a_{2}-a_{5}\\
a_{7}=a_{1}+a_{2}-a_{4}
\end{array}\right..\label{eq:rep1}
\end{equation}
The integration set $V_{x}$ is then reparametrized as a function
of the variables $\vec{a}:=\left(a_{1},a_{2},a_{3},a_{4},a_{5}\right)$.
We indicate with $\mathbf{a}\left(\vec{a},x\right)$ the immersion
from $\mathbb{R}^{5}$ to $\mathbb{R}^{8}$ whose components from
$a_{1}$ to $a_{5}$ are mapped in themselves while the remaining
ones are specified by the equations in~(\ref{eq:rep1}). This makes
the expression of $V_{x}$ more explicit and lets us rewrite the integral
in an equivalent way. The integration set becomes $V_{x}^{\prime}\subseteq\left[0,1\right]^{5}$
and it is specified by the following set of inequalities:

\begin{equation}
\left\{ \begin{array}{l}
0\leq a_{i}\leq1\;\forall\ i=1,\dotsc,5\\
0\leq a_{1}+a_{2}-a_{4}\leq1\\
a_{1}+a_{2}+a_{5}\leq x\\
a_{1}+a_{2}+a_{3}\leq1-x
\end{array}\right..\label{eq:Vprime}
\end{equation}
With this change of variables eq.~(\ref{eq:boundsm}) becomes:

\begin{equation}
\mathbb{E}\left[\mathcal{Z}_{y=2}^{2}\left(x,K,\boldsymbol{\xi}\right)\right]\leq C_{0}N^{3/2}\int_{V_{x}^{\prime}}d\vec{a}\ e^{N\left[\ln2+H_{8}\left(\vec{a},x\right)+\alpha\ln f_{2}\left(\vec{a},x,K\right)\right]},\label{eq:boundsm2}
\end{equation}
where we defined $H_{8}\left(\vec{a},x\right):=H_{8}\left(\mathbf{a}\left(\vec{a},x\right)\right)$
and $f_{2}\left(\vec{a},x,K\right):=f_{2}\left(\mathbf{a}\left(\vec{a},x\right),x,K\right)$.
The covariance matrix in the Gaussian integral $f_{2}\left(\vec{a},x,K\right)$
is reparameterized in the following way (cf.~eq.~(\ref{eq:alloverlaps})):

\begin{equation}
\Sigma=\left(\begin{array}{cccc}
1 & q_{1} & q_{01} & q_{02}\\
q_{1} & 1 & q_{03} & q_{04}\\
q_{01} & q_{03} & 1 & q_{1}\\
q_{02} & q_{04} & q_{1} & 1
\end{array}\right)\quad\textrm{where}\quad\left\{ \begin{array}{l}
q_{1}=1-2x\\
q_{01}=1-2\left(x+a_{2}+a_{3}-a_{4}-a_{5}\right)\\
q_{02}=1-2\left(2a_{1}+a_{2}+a_{3}-a_{4}+a_{5}\right)\\
q_{03}=1-2\left(a_{2}+a_{3}+a_{4}+a_{5}\right)\\
q_{04}=1-2\left(x-a_{2}+a_{3}+a_{4}-a_{5}\right)
\end{array}\right..\label{eq:syseq}
\end{equation}

The next and final reparametrization of the integral is suggested
by the form of the covariance matrix. In particular we would like
to express the four possible overlaps between the two pairs of solution
using the four parameters $q_{01}$, $q_{02}$, $q_{03}$, $q_{04}$
and group them in a four dimensional vector $\vec{q}_{0}$. However,
since our integration domain is 5-dimensional, we need an additional
parameter that we call $\eta$. Inverting the under-parametrized system
of eqs.~(\ref{eq:syseq}), we obtain the vectors $\vec{a}^{\star}$
that lie in the vector space below, for $\eta\in\mathbb{R}$:

\begin{equation}
\left\{ \begin{array}{l}
a_{1}^{\star}=\frac{1}{4}\left(q_{01}-q_{02}+2x\right)-\eta\\
a_{2}^{\star}=\frac{1}{4}\left(-q_{03}+q_{04}+2x\right)-\eta\\
a_{3}^{\star}=\frac{1}{4}\left(2-q_{01}-q_{04}-4x\right)+\eta\\
a_{4}^{\star}=\frac{1}{4}\left(q_{01}-q_{03}+2x\right)-\eta\\
a_{5}^{\star}=\eta
\end{array}\right..\label{eq:syseqris}
\end{equation}
By constraining the solutions $\vec{a}^{\star}$ in their natural
domain $V_{x}^{\prime}$ we find how the domain is transformed in
the new coordinates $\vec{q}_{0}$ and $\eta$: 

\begin{equation}
\left\{ \begin{array}{l}
\frac{1}{4}\left(q_{01}-q_{02}+2x-4\right)\leq\eta\leq\frac{1}{4}\left(q_{01}-q_{02}+2x\right)\\
\frac{1}{4}\left(-q_{03}+q_{04}+2x-4\right)\leq\eta\leq\frac{1}{4}\left(-q_{03}+q_{04}+2x\right)\\
\frac{1}{4}\left(q_{01}+q_{04}+4x-2\right)\leq\eta\leq\frac{1}{4}\left(q_{01}+q_{04}+4x+2\right)\\
\frac{1}{4}\left(q_{01}-q_{03}+2x-4\right)\leq\eta\leq\frac{1}{4}\left(q_{01}-q_{03}+2x\right)\\
0\leq\eta\leq1\\
\frac{1}{4}\left(q_{01}-q_{02}-q_{03}+q_{04}\right)\leq\eta\\
\frac{1}{4}\left(-q_{02}+q_{04}+2x-4\right)\leq\eta\leq\frac{1}{4}\left(-q_{02}+q_{04}+2x\right)\\
\frac{1}{4}\left(-q_{02}-q_{03}+4x-2\right)\leq\eta
\end{array}\right.,\label{eq:syseqris2-1}
\end{equation}
where we have expressed all inequalities in terms of the variable
$\eta$. This set of inequalities specifies a new integration domain
in eq.~(\ref{eq:boundsm2}), this time in the new variables $\eta$
and $\vec{q}_{0}$, that we call $\tilde{V}_{x}$ and that depends
on $x$. Again, we can express the vector of solutions $\vec{a}^{\star}$
as a function of the pair $\left(\vec{q}_{0},\eta\right)$. The integral~(\ref{eq:boundsm2})
is rewritten as:

\begin{equation}
\mathbb{E}\left[\mathcal{Z}_{y=2}^{2}\left(x,K,\xi\right)\right]\leq C_{0}N^{3/2}\int_{\tilde{V}_{x}}d\vec{q}_{0}\ d\eta\ e^{N\left[\ln2+H_{8}\left(\vec{q}_{0},\eta,x\right)+\alpha\ln f_{2}\left(\vec{q}_{0},x,K\right)\right]},\label{eq:boundsm3-1}
\end{equation}
where we adopt the convention that $f_{2}\left(\vec{q}_{0},x,K\right):=f_{2}\left(\vec{a}^{\star}\left(\vec{q}_{0},\eta\right),x,K\right)$
and $H_{8}\left(\vec{q}_{0},\eta,x\right):=H_{8}\left(\vec{a}^{\star}\left(\vec{q}_{0},\eta\right),x\right)$. 

\subsection{\label{subsec:A-Proof-of-Lemma-2}Proof of Lemma \ref{Lemma02}}
\begin{proof}[Proof of Lemma \ref{Lemma02}]
From eq.~(\ref{eq:eqsympN}) we obtain the following inequalities:

\begin{equation}
\left\{ \begin{array}{l}
\vert-a_{0}+1-a_{1}-a_{2}-a_{3}-x\vert<\frac{3}{N}\\
\vert a_{6}-x+a_{1}+a_{2}+a_{5}\vert<\frac{1}{N}\\
\vert a_{7}-a_{1}-a_{2}+a_{4}\vert<\frac{2}{N}
\end{array}\right..\label{eq:finiteSum}
\end{equation}
In the limit $N\to\infty$ these inequalities determine three of the
parameters as a function of the other five:

\begin{equation}
\left\{ \begin{array}{l}
a_{0}^{\star}=1-a_{1}-a_{2}-a_{3}-x\\
a_{6}^{\star}=x-a_{1}-a_{2}-a_{5}\\
a_{7}^{\star}=a_{1}+a_{2}-a_{4}
\end{array}\right..\label{eq:substitute}
\end{equation}
Notice that the summation on the left hand side of eq.~(\ref{eq:discont})
is taken for $\textbf{a}\in\left\{ 0,1/N,2/N,\ldots,1\right\} ^{8}$.
If we fix the five components vector $\vec{a}:=\left(a_{1},\dotsc,a_{5}\right)\in V{}_{x}^{\prime}\cap\left\{ 0,1/N,2/N,\ldots,1\right\} ^{5}$
where $V_{x}^{\prime}$ is defined as in eq.~(\ref{eq:Vprime}),
then, independently from this 5-dimensional vector, there exist at
most a fixed number of $\textbf{a}$'s that satisfy the inequalities
in eq.~(\ref{eq:finiteSum}) (for every $N$ and $x\in\left[0,1\right]$).
This is sufficient to conclude that for large enough $N$ there exists
a positive constant $F_{0}$ such that

\begin{align*}
\sum_{\mathbf{a}\in V_{N,x}\cap\left\{ 0,1/N,2/N,\ldots,1\right\} ^{8}}\binom{N}{Na_{0}\,\dotsc\,Na_{7}}\ \psi\left(\mathbf{a}\right)^{N} & \leq F_{0}\sum_{\vec{a}\in V_{x}^{\prime}\cap\left\{ 0,1/N,2/N,\ldots,1\right\} ^{5}}\binom{N}{\left\lfloor Na_{0}^{\star}\right\rfloor \,Na_{1}\,\dotsc\,Na_{5}\,\left\lfloor Na_{6}^{\star}\right\rfloor \,Na_{7}^{\star}}\ \times\\
 & \qquad\times\ \psi\left(a_{0}^{\star},a_{1},\dotsc,a_{5},a_{6}^{\star},a_{7}^{\star}\right)^{N}.
\end{align*}
where $V_{x}^{\prime}$ is defined by the system of eqs.~(\ref{eq:Vprime}).

From Stirling's approximation, the expression for large $N$ and fixed
$a_{i}$ of the multinomial factor is

\begin{align*}
\binom{N}{Na_{0}\,\dotsc\,Na_{m}} & =e^{NH\left(\mathbf{a}\right)-\frac{m-1}{2}\ln N+\mathcal{O}(1)}\\
 & \leq G_{0}e^{NH\left(\mathbf{a}\right)-\frac{m-1}{2}\ln N}
\end{align*}
where $G_{0}$ is some positive constant and $H\left(\mathbf{a}\right)$
is the Shannon entropy of the discrete probability distribution with
masses $\left\{ a_{0},\dotsc,a_{m}\right\} .$ Putting all together
we have
\begin{align*}
 & \sum_{\mathbf{a}\in V_{N,x}\cap\left\{ 0,1/N,2/N,\ldots,1\right\} ^{8}}\binom{N}{Na_{0}\,\dotsc\,Na_{7}}\ \psi\left(\mathbf{a}\right)^{N}\\
 & \qquad\leq F_{0}\sum_{\vec{a}\in V{}_{x}^{\prime}\cap\left\{ 0,1/N,2/N,\ldots,1\right\} ^{5}}\binom{N}{\left\lfloor Na_{0}^{\star}\right\rfloor \,Na_{1}\,\dotsc\,Na_{5}\,\left\lfloor Na_{6}^{\star}\right\rfloor \,Na_{7}^{\star}}\psi\left(a_{0}^{\star},a_{1},\dotsc,a_{6}^{\star},a_{7}^{\star}\right)^{N}\\
 & \qquad\leq\frac{F_{0}G_{0}}{N^{\frac{7}{2}}}\sum_{\vec{a}\in V{}_{x}^{\prime}\cap\left\{ 0,1/N,2/N,\ldots,1\right\} ^{5}}e^{NH_{8}\left(a_{0}^{\star},a_{1},\dotsc,a_{5},a_{6}^{\star},a_{7}^{\star}\right)-\frac{7}{2}\ln N}\psi\left(a_{0}^{\star},a_{1},\dotsc,a_{5},a_{6}^{\star},a_{7}^{\star}\right)^{N}\\
 & \qquad<C_{0}N^{\frac{3}{2}}\int_{V_{x}}d\mathbf{a}\ e^{N\left[H_{8}\left(\mathbf{a}\right)+\ln\psi\left(\mathbf{a}\right)\right]}
\end{align*}
where we have used the limit of Riemann sums in the last step and
$C_{0}>F_{0}G_{0}$ is a positive constant that does not depend on
$N$ but depends on $x$. The integral in the last line is defined
as in the footnote for Lemma~\ref{Lemma02}.
\end{proof}

\subsection{\label{subsec:Proof-of-eq}Proof of eq.~(\ref{eq:ineqAlpha0})}

For finite $N$ we define $\mathcal{N}_{2}\left(x\right)$ and $\mathcal{N}_{4}\left(x,\mathbf{a}\right)$
as follows. First, 
\[
\mathcal{N}_{2}\left(x\right)\equiv\sum_{\left\{ \mathbf{w}^{1}\right\} }\sum_{\left\{ \mathbf{w}^{2}\right\} }\ind\left(d_{H}\left(\mathbf{w}^{1},\mathbf{w}^{2}\right)=\left\lfloor Nx\right\rfloor \right),
\]
 which implies that
\begin{align*}
\left(\mathcal{N}_{2}\left(x\right)\right)^{2} & =\left(\sum_{\left\{ \mathbf{w}^{1}\right\} }\sum_{\left\{ \mathbf{w}^{2}\right\} }\ind\left(d_{H}\left(\mathbf{w}^{1},\mathbf{w}^{2}\right)=\left\lfloor Nx\right\rfloor \right)\right)^{2}\\
 & =\sum_{\left\{ \mathbf{w}^{1}\right\} }\sum_{\left\{ \mathbf{w}^{2}\right\} }\sum_{\left\{ \mathbf{\tilde{w}}^{1}\right\} }\sum_{\left\{ \mathbf{\tilde{w}}^{2}\right\} }\ind\left(d_{H}\left(\mathbf{w}^{1},\mathbf{w}^{2}\right)=\left\lfloor Nx\right\rfloor \right)\ \ind\left(d_{H}\left(\mathbf{\tilde{w}}^{1},\mathbf{\tilde{w}}^{2}\right)=\left\lfloor Nx\right\rfloor \right).
\end{align*}
Then, for $\mathbf{a}\in V_{N,x}$ we have:
\begin{align*}
\mathcal{N}_{4}\left(x,\mathbf{a}\right) & \equiv\sum_{\left\{ \mathbf{w}^{1}\right\} }\sum_{\left\{ \mathbf{w}^{2}\right\} }\sum_{\left\{ \mathbf{\tilde{w}}^{1}\right\} }\sum_{\left\{ \mathbf{\tilde{w}}^{2}\right\} }\ind\left(d_{H}\left(\mathbf{w}^{1},\mathbf{w}^{2}\right)=\left\lfloor Nx\right\rfloor \right)\ \ind\left(d_{H}\left(\mathbf{\tilde{w}}^{1},\mathbf{\tilde{w}}^{2}\right)=\left\lfloor Nx\right\rfloor \right)\times\\
 & \times\ind\left(d_{H}\left(\mathbf{w}^{1},\mathbf{\tilde{w}}^{1}\right)=\left\lfloor N\left(a_{2}+a_{3}+a_{6}+a_{7}\right)\right\rfloor \right)\ \ind\left(d_{H}\left(\mathbf{w}^{1},\mathbf{\tilde{w}}^{2}\right)=\left\lfloor N\left(a_{1}+a_{3}+a_{5}+a_{7}\right)\right\rfloor \right)\\
 & \times\ind\left(d_{H}\left(\mathbf{w}^{2},\mathbf{\tilde{w}}^{1}\right)=\left\lfloor N\left(a_{2}+a_{3}+a_{4}+a_{5}\right)\right\rfloor \right)\ \ind\left(d_{H}\left(\mathbf{w}^{2},\mathbf{\tilde{w}}^{2}\right)=\left\lfloor N\left(a_{1}+a_{3}+a_{4}+a_{6}\right)\right\rfloor \right).
\end{align*}
From the definitions it follows that $\mathcal{N}_{4}\left(x,\mathbf{a}\right)\leq\left(\mathcal{N}_{2}\left(x\right)\right)^{2}$
and computing the summations gives
\[
2^{N}\frac{N!}{\prod_{i=0}^{7}\left(Na_{i}\right)!}\leq\left(2^{N}\binom{N}{\left\lfloor Nx\right\rfloor }\right)^{2},\quad\forall\boldsymbol{a}\in V_{N,x}\cap\left\{ 0,\frac{1}{N},\dotsc,1\right\} ^{8}.
\]
Taking the logarithm on both sides, dividing by $N$ and taking the
limit for $N\to\infty$, gives the following inequality 
\[
\ln2+H_{8}\left(\mathbf{a}\right)\leq2\log2+2H_{2}(x)\quad\forall\mathbf{a}\in V_{N,x}.
\]
If we apply now the same change of variable of Appendix~\ref{subsec:Change-of-integration}
the result is 
\[
H_{8}\left(\vec{q}_{0},\eta,x\right)\leq\ln2+2H_{2}\left(x\right)\quad\forall\left(\vec{q}_{0},\eta\right)\in\tilde{V}_{x}.
\]

\subsection{Numerical optimization\label{sec:Numerical-optimization}}

We performed the optimization in expression~(\ref{eq:aLB2}) numerically.
We empirically find the objective function to be ridden by many local
minima, therefore we implemented 3 different strategies to partition
the search space and obtain a numerical estimate of the global one.

A first strategy consists in constructing a $4$-dimensional uniformly-spaced
grid for the values of $\vec{q}_{0}$, and then performing Gradient
Descent (GD) starting from these points and selecting the overall
minimum obtained. The downside of this approach is that the the optimization
is very time-consuming. We simulated grids with up to $m=100^{4}$
number of points. We restrict the experiment to the region of small
$x$, in particular $x<x_{c}^{\prime}$. The results are shown in
fig.~\ref{fig:Upper-and-lower4d}. While for $x>x_{c}$, and already
for a low numbers of points $m$, the numerical estimate coincides
with the symmetric point prediction, for $x<x_{c}$ instead, where
we predict the broken symmetry point to yield the true value of $\alpha_{LB}$,
only with the two finest grid spacing we are able to get close to
the theoretical prediction. Overall, the results for this numerical
experiment are in good agreement with theoretical value predicted
for the saddle point by symmetry arguments, supporting our conclusion
that for \textbf{$x<x_{c}$ }lower and upper bounds coincide.

\begin{figure}
\begin{centering}
\includegraphics[width=0.6\columnwidth]{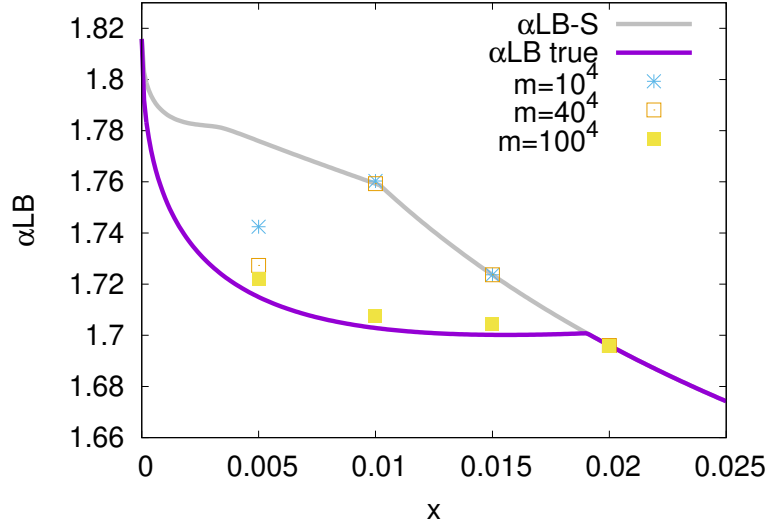}
\par\end{centering}
\caption{\label{fig:Upper-and-lower4d}Numerical lower bounds $\alpha_{LB,y=2}\left(x,K=1\right)$
obtained by multiple restarts of GD from a 4d grids with $m$ points,
for different values of $m$, along with theoretical predictions from
the symmetric point $S$ (that we know to be wrong for $x<x_{c}$)
and the true lower bound (point $S$ for $x>x_{c}$, point $SB$ for
$x<x_{c}$). }
\end{figure}

Another approach is to restrict the search space to a lower dimensional
manifold, containing both the symmetric ($S$) and the symmetry broken
($SB$) points. The lower dimensionality (2 instead of 4) allows us
to use as starting points of our GD procedure grids with smaller spacings.
Therefore, we restrict the search space to points of the type $\vec{q}_{0}=\left(q_{a},q_{b},q_{b},q_{a}\right)$.
The corresponding covariance matrix in this case is given by

\begin{equation}
\Sigma_{SB}=\left(\begin{array}{cccc}
1 & q_{1} & q_{a} & q_{b}\\
q_{1} & 1 & q_{b} & q_{a}\\
q_{a} & q_{b} & 1 & q_{1}\\
q_{b} & q_{a} & q_{1} & 1
\end{array}\right).\label{eq:RSBv}
\end{equation}
The optimization over this submanifold is done by multiple restarts
of GD from a 2-dimensional grid corresponding of values for $q_{a}$
and $q_{b}$. The results are reported in fig.~\ref{fig:lower_bounds_2d}
(Left). Again, while GD quickly finds the global minima for $x>x_{c}$,
the S point, for $x<x_{c}$ the global minima $SB$ is more difficult
to approach, and the restriction to the 2d submanifold doesn't seem
to provide a computational advantage, possibly due to the presence
of further spurious minima in this restricted space. 

A further approach is to just evaluate the objective function in eq.~(\ref{eq:aLB2})
on the points of the increasingly refined 2d-grid, without any GD
refinement, and take the lowest of the values obtained. With this
approach, we evaluated grids of up to $m=5000^{2}$ points. Results
are presented in fig. \ref{fig:lower_bounds_2d} (Right).

All of the 3 approaches are in good agreement with each other and
with theoretical predictions.

\begin{figure}[h]
\begin{centering}
\includegraphics[width=0.49\columnwidth]{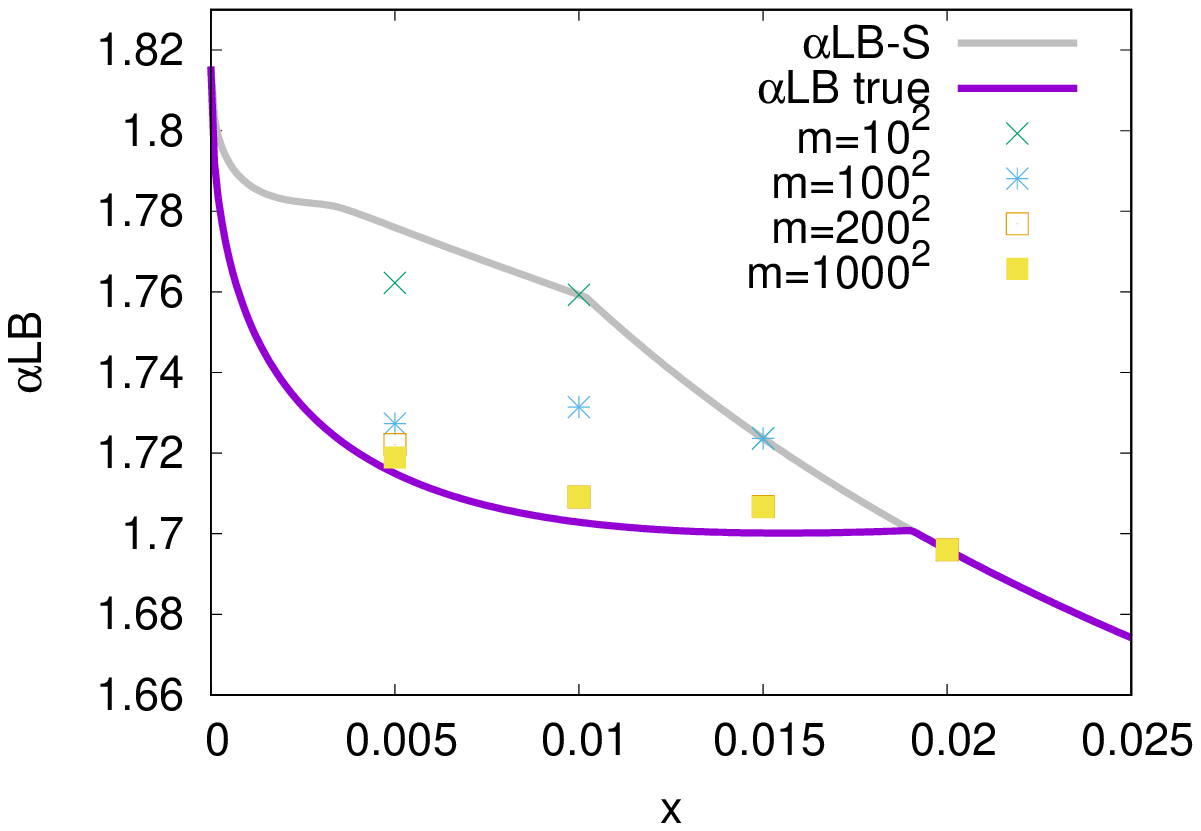}\includegraphics[width=0.49\columnwidth]{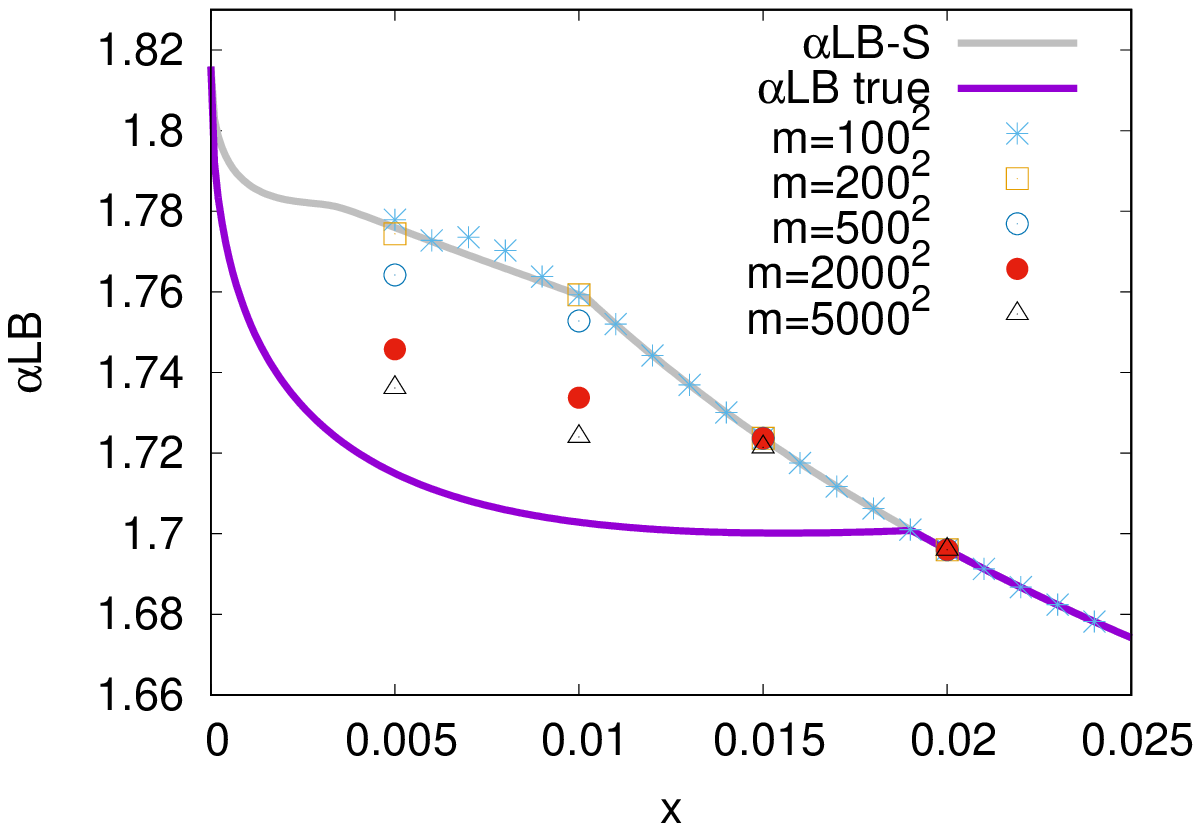}
\par\end{centering}
\caption{\label{fig:lower_bounds_2d}(Left) Numerical and theoretical estimates
for $\alpha_{LB,y=2}\left(x,K=1\right)$ as in fig.~(\ref{fig:Upper-and-lower4d})
but with GD in 2-dimensional space and multiple restarts from grids
of $m$ points. (Right) Evaluation of the points in 2d grids of different
sizes $m$ with no GD refinement.}
\end{figure}

\subsection{Computation of $f_{2}\left(\vec{q}_{0},x,K\right)$}

The computation in an efficient and precise way of the quantity $f_{2}\left(\vec{q}_{0},x,K\right)$
is crucial for the numerical results. We use the Cholesky decomposition
of matrix $\Sigma=C_{L}C_{L}^{T}$ where $C_{L}$ is lower triangular
and $C_{L}^{-1}=C_{L}^{T}$. Then it is natural to use the change
of variable $\mathbf{y=}C_{L}^{-1}\mathbf{z}$, in matrix form 
\[
\left(\begin{array}{c}
z_{1}\\
z_{2}\\
\tilde{z}_{1}\\
\tilde{z}_{2}
\end{array}\right)=\left(\begin{array}{cccc}
c_{11} & 0 & 0 & 0\\
c_{21} & c_{22} & 0 & 0\\
c_{31} & c_{32} & c_{33} & 0\\
c_{41} & c_{42} & c_{43} & c_{44}
\end{array}\right)\left(\begin{array}{c}
y_{1}\\
y_{2}\\
\tilde{y}_{1}\\
\tilde{y}_{2}
\end{array}\right)
\]

the integral is transformed in the following way:

\begin{align*}
 & f_{2}\left(\vec{q}_{0},x,K\right)=\int_{I_{K}^{4}}\frac{dz_{1}dz_{2}d\tilde{z}_{1}d\tilde{z}_{2}}{\left(2\pi\right)^{2}\left|\Sigma\right|^{1/2}}e^{-\frac{1}{2}\mathbf{z}^{T}\Sigma^{-1}\text{\ensuremath{\mathbf{z}}}}\\
 & =\frac{1}{\left(2\pi\right)^{2}}\int_{-\frac{K}{c_{11}}}^{\frac{K}{c_{11}}}dy_{1}\int_{\frac{\left(-K-c_{21}y_{1}\right)}{c_{22}}}^{\frac{\left(K-c_{21}y_{1}\right)}{c_{22}}}dy_{2}\int_{\frac{\left(-K-c_{31}y_{1}-c_{32}y_{2}\right)}{c_{33}}}^{\frac{\left(K-c_{31}y_{1}-c_{32}y_{2}\right)}{c_{33}}}d\tilde{y}_{1}\int_{\frac{\left(-K-c_{41}y_{1}-c_{42}y_{2}-c_{43}\tilde{y}_{1}\right)}{c_{44}}}^{\frac{\left(K-c_{41}y_{1}-c_{42}y_{2}-c_{43}\tilde{y}_{1}\right)}{c_{44}}}d\tilde{y}_{2}e^{-\frac{\textbf{y}^{T}\textbf{y}}{2}}\\
 & =\frac{1}{\left(2\pi\right)^{\frac{3}{2}}}\int_{-\frac{K}{c_{11}}}^{\frac{K}{c_{11}}}dy_{1}\int_{\frac{\left(-K-c_{21}y_{1}\right)}{c_{22}}}^{\frac{\left(K-c_{21}y_{1}\right)}{c_{22}}}dy_{2}\int_{\frac{\left(-K-c_{31}y_{1}-c_{32}y_{2}\right)}{c_{33}}}^{\frac{\left(K-c_{31}y_{1}-c_{32}y_{2}\right)}{c_{33}}}d\tilde{y}_{1}e^{-\frac{y_{1}^{2}+y_{2}^{2}+\tilde{y}_{1}^{2}}{2}}\sum_{s=\pm1}s\,H\left(\frac{-sK-c_{41}y_{1}-c_{42}y_{2}-c_{43}\tilde{y}_{1}}{c_{44}}\right)
\end{align*}
where in the last line we have performed the integral over $\tilde{y}_{2}$,
using the definition $H\left(x\right)=\frac{1}{2}\mathrm{erfc}\left(\frac{x}{\sqrt{2}}\right)$.

\section{$n$-th moment of $y$-solutions multiplet using Replica Ansatz\label{sec:n-th-moment-using}}

Let us define $\mathcal{Z}_{y}$ to be the number of configurations
of $y$ vectors of binary weights each satisfying the CSP eq.~(\ref{eq:CSP})
and whose mutual distance is $x$. In the following we will use the
overlap $q_{1}=1-2x$ between solutions as an external control parameter.
We also introduce for convenience of notation the indicator functions
$\varphi_{K}\left(z\right)=\ind\left(\left|z\right|\le K\right)$
and $\delta\left(z\right)=\ind\left(z=0\right)$. We denote with $\delta_{D}$
the Dirac's delta distribution. With these definitions we have:

\begin{eqnarray*}
\mathcal{Z}_{y}\left(q_{1},K,\boldsymbol{\xi}\right) & = & \sum_{\left\{ \mathbf{w}^{a}\right\} _{a=1}^{y}}\prod_{a=1}^{y}\mathbb{X}_{\xi,K}\left(\textbf{w}^{a}\right)\prod_{a<b}^{y}\delta\left(\sum_{i}w_{i}^{a}w_{i}^{b}-\left\lfloor Nq_{1}\right\rfloor \right)\\
 & = & \sum_{\left\{ \mathbf{w}^{a}\right\} _{a=1}^{y}}\prod_{a=1}^{y}\prod_{\mu=1}^{M}\varphi_{K}\left(\sum_{i}w_{i}^{a}\xi_{i}^{\mu}\right)\prod_{a<b}^{y}\delta\left(\sum_{i}w_{i}^{a}w_{i}^{b}-\left\lfloor Nq_{1}\right\rfloor \right).
\end{eqnarray*}

We want to take the expectation of the $n$-th moment of this partition
function:

\begin{eqnarray*}
\mathcal{Z}_{y}^{n}\left(q_{1},K,\boldsymbol{\xi}\right) & = & \sum_{\left\{ \mathbf{w}_{\alpha}^{a}\right\} }\prod_{a,\alpha,\mu}\varphi_{K}\left(\sum_{i}w_{\alpha,i}^{a}\xi_{i}^{\mu}\right)\prod_{\alpha,a<b}\delta\left(\sum_{i}w_{\alpha,i}^{a}w_{\alpha,i}^{b}-\left\lfloor Nq_{1}\right\rfloor \right)\\
 & = & \sum_{\left\{ \mathbf{w}_{\alpha}^{a}\right\} }\int\prod_{a,\alpha,\mu}d\lambda_{\alpha,\mu}^{a}\varphi_{K}\left(\lambda_{\alpha,\mu}^{a}\right)\delta_{D}\left(\lambda_{\alpha,\mu}^{a}-\sum_{i}w_{\alpha,i}^{a}\xi_{i}^{\mu}\right)\prod_{\alpha,a<b}\delta\left(\sum_{i}w_{\alpha,i}^{a}w_{\alpha,i}^{b}-\left\lfloor Nq_{1}\right\rfloor \right)\\
 & = & \sum_{\left\{ \mathbf{w}_{\alpha}^{a}\right\} }\int\prod_{a,\alpha,\mu}\frac{d\lambda_{\alpha,\mu}^{a}d\hat{\lambda}_{\alpha,\mu}^{a}}{2\pi}\varphi_{K}\left(\lambda_{\alpha,\mu}^{a}\right)e^{i\hat{\lambda}_{\alpha,\mu}^{a}\lambda_{\alpha,\mu}^{a}-i\hat{\lambda}_{\alpha,\mu}^{a}\sum_{i}w_{\alpha,i}^{a}\xi_{i}^{\mu}}\prod_{\alpha,a<b}\delta\left(\sum_{i}w_{\alpha,i}^{a}w_{\alpha,i}^{b}-\left\lfloor Nq_{1}\right\rfloor \right).
\end{eqnarray*}

Now we can take the average over the quenched disorder (in the large
$N$ limit, up to the leading exponential order):

\begin{eqnarray*}
\mathbb{E}\left[\mathcal{Z}_{y}^{n}\left(q_{1},K,\boldsymbol{\xi}\right)\right] & = & \sum_{\left\{ \mathbf{w}_{\alpha}^{a}\right\} }\int\prod_{a,\alpha,\mu}\left(\frac{d\lambda_{\alpha,\mu}^{a}d\hat{\lambda}_{\alpha,\mu}^{a}}{2\pi}\varphi_{K}\left(\lambda_{\alpha,\mu}^{a}\right)e^{i\hat{\lambda}_{\alpha,\mu}^{a}\lambda_{\alpha,\mu}^{a}}\right)\mathbb{E}\left[e^{\sum_{\mu,i}\xi_{i}^{\mu}\sum_{a,\alpha}\left(-i\hat{\lambda}_{\alpha,\mu}^{a}w_{\alpha,i}^{a}\right)}\right]\prod_{\alpha,a<b}\delta\left(\sum_{i}w_{\alpha,i}^{a}w_{\alpha,i}^{b}-\left\lfloor Nq_{1}\right\rfloor \right)\\
 & \cong & \sum_{\left\{ \mathbf{w}_{\alpha}^{a}\right\} }\int\prod_{a,\alpha,\mu}\left(\frac{d\lambda_{\alpha,\mu}^{a}d\hat{\lambda}_{\alpha,\mu}^{a}}{2\pi}\varphi_{K}\left(\lambda_{\alpha,\mu}^{a}\right)e^{i\hat{\lambda}_{\alpha,\mu}^{a}\lambda_{\alpha,\mu}^{a}}\right)e^{-\frac{1}{2N}\sum_{\mu,i}\left(\sum_{a,\alpha}\hat{\lambda}_{\alpha,\mu}^{a}w_{\alpha,i}^{a}\right)^{2}}\prod_{\alpha,a<b}\delta\left(\sum_{i}w_{\alpha,i}^{a}w_{\alpha,i}^{b}-\left\lfloor Nq_{1}\right\rfloor \right)\\
 & = & \sum_{\left\{ \mathbf{w}_{\alpha}^{a}\right\} }\int\prod_{a,\alpha,\mu}\left(\frac{d\lambda_{\alpha,\mu}^{a}d\hat{\lambda}_{\alpha,\mu}^{a}}{2\pi}\varphi_{K}\left(\lambda_{\alpha,\mu}^{a}\right)\right)e^{i\sum_{\alpha,a,\mu}\hat{\lambda}_{\alpha,\mu}^{a}\lambda_{\alpha,\mu}^{a}-\frac{1}{2}\sum_{\mu}\sum_{a,b}\sum_{\alpha,\beta}\hat{\lambda}_{\alpha,\mu}^{a}\hat{\lambda}_{\beta,\mu}^{b}\left(\frac{\sum_{i}w_{\alpha,i}^{a}w_{\beta,i}^{b}}{N}\right)}\times\\
 &  & \times\prod_{\alpha,a<b}\delta\left(\sum_{i}w_{\alpha,i}^{a}w_{\alpha,i}^{b}-\left\lfloor Nq_{1}\right\rfloor \right).
\end{eqnarray*}

Next, we introduce the overlaps $q_{\alpha\beta}^{ab}=\frac{\sum_{i}w_{\alpha,i}^{a}w_{\beta,i}^{b}}{N}$
via Dirac deltas. :

\begin{eqnarray*}
 & = & \sum_{\left\{ \mathbf{w}_{\alpha}^{a}\right\} }\int\prod_{a,\alpha,\mu}\left(\frac{d\lambda_{\alpha,\mu}^{a}d\hat{\lambda}_{\alpha,\mu}^{a}}{2\pi}\varphi_{K}\left(\lambda_{\alpha,\mu}^{a}\right)\right)\int\prod_{\alpha<\beta;a,b}dq_{\alpha\beta}^{ab}\int\prod_{\alpha;a<b}dq_{\alpha\alpha}^{ab}e^{i\sum_{\alpha,a,\mu}\hat{\lambda}_{\alpha,\mu}^{a}\lambda_{\alpha,\mu}^{a}-\sum_{\mu}\sum_{a,b,\alpha<\beta}\hat{\lambda}_{\alpha,\mu}^{a}\hat{\lambda}_{\beta,\mu}^{b}q_{\alpha\beta}^{ab}}\times\\
 &  & e^{-\sum_{\mu}\sum_{\alpha,a<b}\hat{\lambda}_{\alpha,\mu}^{a}\hat{\lambda}_{\beta,\mu}^{b}q_{1}-\frac{1}{2}\sum_{\mu}\sum_{a,\alpha}\left(\hat{\lambda}_{\alpha,\mu}^{a}\right)^{2}}\prod_{\alpha<\beta;a,b}\delta_{D}\left(\frac{\sum_{i}w_{\alpha,i}^{a}w_{\beta,i}^{b}}{N}-q_{\alpha\beta}^{ab}\right)\prod_{\alpha,a<b}\delta_{D}\left(\frac{\sum_{i}w_{\alpha,i}^{a}w_{\alpha,i}^{b}}{N}-q_{\alpha\alpha}^{ab}\right)\delta\left(Nq_{\alpha\alpha}^{ab}-\left\lfloor Nq_{1}\right\rfloor \right)\\
 & \cong & \sum_{\left\{ \mathbf{w}_{\alpha}^{a}\right\} }\int\prod_{\alpha<\beta;a,b}\frac{dq_{\alpha\beta}^{ab}d\hat{q}_{\alpha\beta}^{ab}}{2\pi}\int\prod_{\alpha;a<b}\frac{d\hat{q}_{\alpha\alpha}^{ab}}{2\pi}\int\prod_{a,\alpha,\mu}\left(\frac{d\lambda_{\alpha,\mu}^{a}d\hat{\lambda}_{\alpha,\mu}^{a}}{2\pi}\varphi_{K}\left(\lambda_{\alpha,\mu}^{a}\right)\right)\times\\
 &  & \times e^{i\sum_{\alpha,a,\mu}\hat{\lambda}_{\alpha,\mu}^{a}\lambda_{\alpha,\mu}^{a}-\sum_{\mu}\sum_{a,b,\alpha<\beta}\hat{\lambda}_{\alpha,\mu}^{a}\hat{\lambda}_{\beta,\mu}^{b}q_{\alpha\beta}^{ab}-\sum_{\mu}\sum_{\alpha,a<b}\hat{\lambda}_{\alpha,\mu}^{a}\hat{\lambda}_{\beta,\mu}^{b}q_{1}-\frac{1}{2}\sum_{\mu}\sum_{a,\alpha}\left(\hat{\lambda}_{\alpha,\mu}^{a}\right)^{2}-N\sum_{\alpha<\beta;a,b}\hat{q}_{\alpha\beta}^{ab}q_{\alpha\beta}^{ab}}\times\\
 &  & \times e^{\sum_{\alpha<\beta;a,b}\hat{q}_{\alpha\beta}^{ab}\sum_{i}w_{\alpha,i}^{a}w_{\beta,i}^{b}-Nq_{1}\sum_{\alpha,a<b}\hat{q}_{\alpha}^{ab}+\sum_{\alpha,a<b}\hat{q}_{\alpha\alpha}^{ab}\sum_{i}w_{\alpha,i}^{a}w_{\alpha,i}^{b}}\\
 & = & \int\prod_{\alpha<\beta;a,b}\frac{dq_{\alpha\beta}^{ab}d\hat{q}_{\alpha\beta}^{ab}}{2\pi}\prod_{\alpha;a<b}\frac{d\hat{q}_{\alpha\alpha}^{ab}}{2\pi}e^{N\left(G_{I}\left(q,\hat{q}\right)+G_{S}\left(\hat{q}\right)+\alpha G_{E}\left(q\right)\right)},
\end{eqnarray*}
where we have introduced the interaction, entropic and energetic terms:

\begin{eqnarray*}
G_{I}^{n,y}\left(q,\hat{q}\right) & = & -\sum_{\alpha<\beta;a,b}\hat{q}_{\alpha\beta}^{ab}q_{\alpha\beta}^{ab}-q_{1}\sum_{\alpha,a<b}\hat{q}_{\alpha}^{ab}\\
G_{S}^{n,y}\left(\hat{q}\right) & = & \frac{1}{N}\ln\sum_{\left\{ \mathbf{w}_{\alpha}^{a}\right\} }e^{\sum_{\alpha<\beta;a,b}\hat{q}_{\alpha\beta}^{ab}\sum_{i}w_{\alpha,i}^{a}w_{\beta,i}^{b}+\sum_{\alpha,a<b}\hat{q}_{\alpha\alpha}^{ab}\sum_{i}w_{\alpha,i}^{a}w_{\alpha,i}^{b}}\\
G_{E}^{n,y,K}\left(q\right) & = & \frac{1}{\alpha N}\ln\int\prod_{a,\alpha,\mu}\left(\frac{d\lambda_{\alpha,\mu}^{a}d\hat{\lambda}_{\alpha,\mu}^{a}}{2\pi}\varphi_{K}\left(\lambda_{\alpha,\mu}^{a}\right)\right)e^{i\sum_{\alpha,a,\mu}\hat{\lambda}_{\alpha,\mu}^{a}\lambda_{\alpha,\mu}^{a}-\sum_{\mu}\sum_{a,b,\alpha<\beta}\hat{\lambda}_{\alpha,\mu}^{a}\hat{\lambda}_{\beta,\mu}^{b}q_{\alpha\beta}^{ab}}\times\\
 &  & \times e^{-\sum_{\mu}\sum_{\alpha,a<b}\hat{\lambda}_{\alpha,\mu}^{a}\hat{\lambda}_{\beta,\mu}^{b}q_{1}-\frac{1}{2}\sum_{\mu}\sum_{a,\alpha}\left(\hat{\lambda}_{\alpha,\mu}^{a}\right)^{2}}
\end{eqnarray*}

We introduce a replica-symmetric ansatz on the matrices $Q_{\alpha\beta}$
and $\hat{Q}_{\alpha\beta}$ which is specified by the following set
of equations: 
\[
Q_{\alpha\beta}^{ab}=\begin{cases}
1 & \text{if }\alpha=\beta\text{ and }a=b\\
q_{0} & \text{if }\alpha\neq\beta\\
q_{1} & \text{if }\alpha=\beta\text{ and }a\neq b
\end{cases}\qquad\hat{Q}_{\alpha\beta}^{ab}=\begin{cases}
0 & \text{if }\alpha=\beta\text{ and }a=b\\
\hat{q}_{0} & \text{if }\alpha\neq\beta\\
\hat{q}_{1} & \text{if }\alpha=\beta\text{ and }a\neq b
\end{cases}.
\]
In the case $y=3$ and $n=2$ they look as follows:

\[
Q=\left(\begin{array}{cccccc}
1 & q_{1} & q_{1} & q_{0} & q_{0} & q_{0}\\
q_{1} & 1 & q_{1} & q_{0} & q_{0} & q_{0}\\
q_{1} & q_{1} & 1 & q_{0} & q_{0} & q_{0}\\
q_{0} & q_{0} & q_{0} & 1 & q_{1} & q_{1}\\
q_{0} & q_{0} & q_{0} & q_{1} & 1 & q_{1}\\
q_{0} & q_{0} & q_{0} & q_{1} & q_{1} & 1
\end{array}\right)\qquad\hat{Q}=\left(\begin{array}{cccccc}
0 & \hat{q}_{1} & \hat{q}_{1} & \hat{q}_{0} & \hat{q}_{0} & \hat{q}_{0}\\
\hat{q}_{1} & 0 & \hat{q}_{1} & \hat{q}_{0} & \hat{q}_{0} & \hat{q}_{0}\\
\hat{q}_{1} & \hat{q}_{1} & 0 & \hat{q}_{0} & \hat{q}_{0} & \hat{q}_{0}\\
\hat{q}_{0} & \hat{q}_{0} & \hat{q}_{0} & 0 & \hat{q}_{1} & \hat{q}_{1}\\
\hat{q}_{0} & \hat{q}_{0} & \hat{q}_{0} & \hat{q}_{1} & 0 & \hat{q}_{1}\\
\hat{q}_{0} & \hat{q}_{0} & \hat{q}_{0} & \hat{q}_{1} & \hat{q}_{1} & 0
\end{array}\right).
\]

We now compute the interaction, entropic and energetic terms using
this ansatz: 

\begin{eqnarray}
G_{I}^{n,y}\left(q_{0},q_{1},\hat{q}_{0},\hat{q}_{1}\right) & = & -y^{2}\frac{n\left(n-1\right)}{2}q_{0}\hat{q}_{0}-n\frac{y\left(y-1\right)}{2}q_{1}\hat{q}_{1}-\frac{yn}{2}\hat{q}_{1}\label{eq:interactionG}
\end{eqnarray}

\begin{eqnarray}
G_{S}^{n,y}\left(\hat{q}_{0},\hat{q}_{1}\right) & = & \frac{1}{N}\ln\sum_{\left\{ \mathbf{w}_{\alpha}^{a}\right\} }\prod_{i}e^{\sum_{\alpha<\beta;a,b}\hat{q}_{0}w_{\alpha,i}^{a}w_{\beta,i}^{b}+\sum_{\alpha,a<b}\hat{q}_{1}w_{\alpha,i}^{a}w_{\alpha,i}^{b}}\label{eq:entropicG}\\
 & = & -\frac{ny\hat{q}_{1}}{2}+\ln\sum_{\left\{ w_{\alpha}^{a}\right\} }e^{\frac{1}{2}\hat{q}_{0}\left(\sum_{a\alpha}w_{\alpha}^{a}\right)^{2}+\frac{\hat{q}_{1}-\hat{q}_{0}}{2}\sum_{\alpha}\left(\sum_{a}w_{\alpha}^{a}\right)^{2}}\nonumber \\
 & = & -\frac{ny\hat{q}_{1}}{2}+\ln\sum_{\left\{ w_{\alpha}^{a}\right\} }\int Dz\ e^{z\sqrt{\hat{q}_{0}}\sum_{a\alpha}w_{\alpha}^{a}}\int\prod_{\alpha}Dt_{\alpha}e^{\sqrt{\hat{q}_{1}-\hat{q}_{0}}\sum_{\alpha}t_{\alpha}\sum_{a}w_{\alpha}^{a}}\nonumber \\
 & = & -\frac{ny\hat{q}_{1}}{2}+\ln\int Dz\left[\int Dt\left(2\cosh\left(\sqrt{\hat{q}_{0}}z+\sqrt{\hat{q}_{1}-\hat{q}_{0}}t\right)\right)^{y}\right]^{n}\nonumber 
\end{eqnarray}
\begin{eqnarray}
G_{E}^{n,y,K}\left(q_{0},q_{1}\right) & = & \frac{1}{\alpha N}\ln\int\prod_{a,\alpha,\mu}\left(\frac{d\lambda_{\alpha,\mu}^{a}d\hat{\lambda}_{\alpha,\mu}^{a}}{2\pi}\varphi_{K}\left(\lambda_{\alpha,\mu}^{a}\right)\right)e^{i\sum_{\alpha,a,\mu}\hat{\lambda}_{\alpha,\mu}^{a}\lambda_{\alpha,\mu}^{a}-\sum_{\mu}q_{0}\sum_{a,b,\alpha<\beta}\hat{\lambda}_{\alpha,\mu}^{a}\hat{\lambda}_{\beta,\mu}^{b}}\times\label{eq:energeticG}\\
 &  & \times e^{-\sum_{\mu}\sum_{\alpha,a<b}\hat{\lambda}_{\alpha,\mu}^{a}\hat{\lambda}_{\beta,\mu}^{b}q_{1}-\frac{1}{2}\sum_{\mu}\sum_{a,\alpha}\left(\hat{\lambda}_{\alpha,\mu}^{a}\right)^{2}}\nonumber \\
 & = & \ln\int\prod_{a,\alpha}\left(\frac{d\lambda_{\alpha}^{a}d\hat{\lambda}_{\alpha}^{a}}{2\pi}\varphi_{K}\left(\lambda_{\alpha}^{a}\right)\right)e^{i\sum_{\alpha,a}\hat{\lambda}_{\alpha}^{a}\lambda_{\alpha}^{a}-\frac{1}{2}q_{0}\left(\sum_{a\alpha}\hat{\lambda}_{\alpha}^{a}\right)^{2}-\frac{q_{1}-q_{0}}{2}\sum_{\alpha}\left(\sum_{a}\hat{\lambda}_{\alpha}^{a}\right)^{2}-\frac{1-q_{1}}{2}\sum_{a\alpha}\left(\hat{\lambda}_{\alpha}^{a}\right)^{2}}\nonumber \\
 & = & \ln\int Dz\int\prod_{\alpha}Dt_{\alpha}\int\prod_{a\alpha}\left(\frac{d\lambda_{\alpha}^{a}d\hat{\lambda}_{\alpha}^{a}}{2\pi}\varphi_{K}\left(\lambda_{\alpha}^{a}\right)\right)e^{i\sum_{\alpha,a}\hat{\lambda}_{\alpha}^{a}\lambda_{\alpha}^{a}+iz\sqrt{q_{0}}\sum_{a\alpha}\hat{\lambda}_{\alpha}^{a}+i\sqrt{q_{1}-q_{0}}\sum_{\alpha}t_{\alpha}\sum_{a}\hat{\lambda}_{\alpha}^{a}-\frac{1-q_{1}}{2}\sum_{a\alpha}\left(\hat{\lambda}_{\alpha}^{a}\right)^{2}}\nonumber \\
 & = & \ln\int Dz\left[\int Dt\left[\int\frac{d\lambda d\hat{\lambda}}{2\pi}\varphi_{K}\left(\lambda\right)e^{i\hat{\lambda}\lambda+iz\sqrt{q_{0}}\hat{\lambda}+i\sqrt{q_{1}-q_{0}}t\hat{\lambda}-\frac{1-q_{1}}{2}\hat{\lambda}^{2}}\right]^{y}\right]^{n}\nonumber \\
 & = & \ln\int Dz\left[\int Dt\left[\int\frac{d\lambda}{\sqrt{2\pi\left(1-q_{1}\right)}}\varphi_{K}\left(\lambda\right)e^{-\frac{\left(\lambda+\sqrt{q_{0}}z+\sqrt{q_{1}-q_{0}}t\right)^{2}}{2\left(1-q_{1}\right)}}\right]^{y}\right]^{n}\nonumber \\
 & = & \ln\int Dz\left[\int Dt\left[\sum_{s=\pm1}s\,H\left(\frac{-s\,K}{\sqrt{1-q_{1}}}+\frac{\sqrt{q_{0}}z+\sqrt{q_{1}-q_{0}}t}{\sqrt{1-q_{1}}}\right)\right]^{y}\right]^{n}\nonumber 
\end{eqnarray}

In the last line, as in the main text, the function $H\left(x\right)$
is defined as $H\left(x\right)\equiv\int_{x}^{\infty}Dz\equiv\int_{x}^{\infty}\,\frac{dz}{\sqrt{2\pi}}\,e^{-z^{2}/2}=\frac{1}{2}\mathrm{erfc}\left(\frac{x}{\sqrt{2}}\right).$
\end{document}